\tikzset{quantum/.style={decorate, decoration=snake}}
\newcommand{\diagdots}[3][-25]{%
  \rotatebox{#1}{\makebox[0pt]{\makebox[#2]{\xleaders\hbox{$\cdot$\hskip#3}\hfill\kern0pt}}}%
}
\newcommand{\ket}[1]{\left|#1\right\rangle}		
\newcommand{\bra}[1]{\left\langle#1\right|}
\newcommand{\braket}[2]{\left\langle #1\lvert#2\right\rangle}
\newcommand{\abs}[1]{\lvert #1\rvert}
\newcommand{\norm}[1]{\| #1\|}
\newcommand{\eps}{\varepsilon}
\newcommand{\tr}[1]{\mathrm{Tr}\left[#1\right]}
\newcommand{\QPVcohf}{$\mathrm{QPV}_{\mathrm{coh}}^{f}$}
\newcommand{\QPVcohfm}{$\mathrm{QPV}_{\mathrm{coh}_{m_0}}^{f}$}
\newcommand{\QPVBB}{$\mathrm{QPV}_{\mathrm{BB84}}$}
\newcommand{\QPVBBf}{$\mathrm{QPV}_{\mathrm{BB84}}^{f}$}
\definecolor{darkred}{RGB}{179, 16, 32}
\theoremstyle{plain}
\newtheorem{theorem}{Theorem}[section]
\newtheorem{remark}[theorem]{Remark}
\newtheorem{prop}[theorem]{Proposition}
\newtheorem{definition}[theorem]{Definition}
\newtheorem{lemma}[theorem]{Lemma}
\newtheorem{corollary}[theorem]{Corollary}
\title{Continuous-variable Quantum Position Verification secure against entangled attackers}
\author[1]{Rene Allerstorfer}
\author[2]{Lloren\c{c} Escol\`a-Farr\`as}
\author[3]{Arpan Akash Ray}
\author[3]{Boris \v{S}kori\'{c}}
\author[2]{Florian Speelman}
\date{\today}
\affil[1]{QuSoft \& CWI Amsterdam, The Netherlands}
\affil[2]{QuSoft \& Informatics Institute, University of Amsterdam, The Netherlands }
\affil[3]{TU Eindhoven, The Netherlands }
\begin{document}

\maketitle
\begin{abstract}
\noindent Motivated by the fact that coherent states may offer practical advantages it was recently shown that a continuous-variable (CV) quantum position verification (QPV) protocol using coherent states could be securely implemented if and only if attackers do not pre-share any entanglement. In the discrete-variable (DV) analogue of that protocol it was shown that modifying how the classical input information is sent from the verifiers to the prover leads to a favourable scaling in the resource requirements for a quantum attack. In this work, we show that similar conclusions can be drawn for CV-QPV. By adding extra classical information of size $n$ to a CV-QPV protocol, we show that the protocol, which uses a coherent state and classical information, remains secure, even if the quantum information travels arbitrarily slow, against attackers who pre-share CV (entangled) states with a linear (in $n$) cutoff at the photon number. We show that the protocol remains secure for certain attenuation and  excess noise. 

\end{abstract}
\tableofcontents

%%%%%%%%%%%%%%%%%%%%%%%%%%%%%%%%%
\section{Introduction}
Position-based cryptography utilises the geographic location of a party as the only cryptographic credential to authenticate it, without further assumptions. For instance, to authenticate a website or online media one could verify that the server is located \textit{where} it should be or that the media was recorded \textit{where} it claims to have been (instead of being created by a powerful AI or untrusted parties, for example). Part of position-based cryptography is the task of position verification, where an untrusted prover aims to convince verifiers that he is present at a certain position $P$. 

This primitive was first introduced by Chandran, Goyal,
Moriarty, and Ostrovsky \cite{OriginalPositionBasedCryptChandran2009}, and it has been shown that no classical position-verification protocol can exist, due to a universal attack based on cloning input information. This attack fails in the quantum setting because of the no-cloning theorem \cite{Wootters1982NoCloning}. Quantum position verification (QPV) has been studied\footnote{under the name of `quantum tagging'} since the early 2000s by several authors \cite{PatentKentANdOthers,Malaney_2010_b,Malaney_2010_a,Lau_2011}. Proposed QPV protocols rely on both relativistic constraints (in a $d$-dimensional Minkowski space-time $M^{(d,1)}$) and the laws of quantum mechanics. In the literature, usually the case $d=1$ studied for simplicity, i.e.\ verifying the position of $P$ in a line (by two verifiers $V_0$ and $V_1$ who are placed on the left and right of $P$, respectively), since it makes the analysis easier and the main ideas generalize to higher dimensions. Despite the failure of the classical universal attack, a universal quantum attack has been found \cite{Buhrman_2014, Beigi_2011}. However, this attack consumes an amount of entanglement exponential in the input size and is therefore not practically feasible. Thus, we may still find secure QPV protocols in the bounded-entanglement model. 

The analysis of the entanglement resources needed turns out to be a deep question in its own right~\cite{Buhrman_2013_Graden_hose,speelman2016instantaneous,dolev2022non,bluhm2022single,Cree2023coderoutingnew,allerstorfer2023relating,apel2024security,asadi2024rank,asadi2024linear}. Many protocols have since been proposed~\cite{Chakraborty_2015,SWAP_protocol_Rene_et_all,gonzales2019bounds,liu_et_al:LIPIcs.ITCS.2022.100,allerstorfer2023security,allerstorfer2023making} and different security models have been studied~\cite{Unruh_2014_QPV_random_oracle,gao2016quantum,dolev2019constraining,allerstorfer2022role}. Recent work has focused on the practicality of implementing position-verification protocols. Aspects such as channel loss and error tolerance of certain QPV protocols must be taken into account~\cite{allerstorfer2022role,escolafarras2022singlequbit, allerstorfer2023making}. 

Most previous QPV protocols have been based on finite-dimensional quantum systems, with the exception of \cite{qi2015loss, allerstorfer2023security}.

Continuous-variable quantum systems are relevant for quantum communication and quantum-limited detection and imaging techniques because they provide a quantum description of the propagating electromagnetic field. Of particular relevance are the eigenstates of the annihilation operator, also known as coherent states, and their quadrature squeezed counterparts known as squeezed coherent states.
Much research regarding continuous-variable quantum key distribution (QKD) has been conducted. Firstly proposed with discrete \cite{PhysRevA.61.010303, PhysRevA.61.022309, PhysRevA.62.062308} and Gaussian \cite{PhysRevA.63.052311} encoding of squeezed states, soon a variety of protocols were published on Gaussian-modulated CV-QKD with coherent states \cite{PhysRevLett.88.057902, article_Grossmman_03, 10.5555/2011564.2011570, PhysRevLett.93.170504}. In this paper, we employ many techniques that are common in CV-QKD. Theoretical reviews with practical considerations of CV-QKD can be found in \cite{garcia2007quantum, leverrier:tel-00451021}.

CV systems are much simpler to handle in practice and leverage several decades of experience in
coherent optical communication technology. 
One particular advantage is that no true single-photon preparation or detection is necessary. 
Clean creation and detection of single photons is still expensive and technically challenging, especially if photon number resolution is desired. 
In contrast, homodyne and heterodyne measurements are easy to implement and much existing infrastructure is geared towards handling light at low-loss telecom wavelengths (1310nm, 1550nm), whereas an ideal single photon source in these wavelength bands still has to be discovered and frequency up-conversion is challenging and introduces new losses and errors.

In \cite{allerstorfer2023security}, the CV analogue of the \QPVBB~protocol, first introduced and studied in~\cite{OriginalQPV_Kent2011, Buhrman_2014, Lau_2011}, was defined and analysed. In this article, we extend the CV-QPV literature by considering the CV version of the practically interesting protocol \QPVBBf~\cite{bluhm2022single, escolafarras2022singlequbit}, where the classical input information is split up (into, say, $x,y$) and each verifier sends out one part of it. The prover then applies the appropriate measurement based on the value $f(x,y)$ for the chosen protocol function $f$. The advantage of this is that the required quantum resources for a successful attack become larger and scale linearly in the size $n$ of the \textit{classical} input strings $x,y$. Thus, increasing the classical input size makes the quantum attack harder -- a very favourable property of \QPVBBf. It is theoretically, and also potentially practically, interesting whether this property holds the same way in the CV case, which is why we study it.

Employing previous results from \cite{bluhm2022single} and \cite{allerstorfer2023security}, the main take-away of this work is that, indeed, the CV protocol shares the desired characteristics regarding entanglement attacks of the discrete variable version. 
However, it was shown in \cite{qi2015loss} that a simple generic attack exists as long as the transmission is $t \leq 1/2$ for this type of CV-QPV protocol.

More concretely, we show that, for a random function $f$, the protocol remains secure against attackers who pre-share a quantum resource state with dimension linear in $n$. Moreover, the protocol remains secure even if the quantum information is sent arbitrarily slowly. We also consider attenuation and excess noise in the CV channel in our analysis. The underlying technique is to lower bound the attackers' entropy about the quadrature value $R$ that the protocol asks for. To do so, we employ the continuity of the conditional entropy for continuous variables in terms of the energy as shown in \cite{Winter_2016}. This then implies that the sample the attackers respond with will necessarily have a higher variance than the honest sample, which allows the verifiers to distinguish between an honest and a dishonest sample.

\section{Preliminaries}

Let $\mathcal{H}$, $\mathcal{H'}$ be finite-dimensional Hilbert spaces, we denote by $\mathcal{B}(\mathcal{H},\mathcal{H'})$ the set of bounded operators from $\mathcal{H}$ to $\mathcal{H'}$ and $\mathcal{B}(\mathcal{H})=\mathcal{B}(\mathcal{H},\mathcal{H})$. Denote by $\mathcal{S}(\mathcal{H})$ the set of quantum states on $\mathcal{H}$,~i.e.\ $\mathcal{S}(\mathcal{H})=\{\rho\in\mathcal{B}(\mathcal{H})\mid \rho\geq0, \tr{\rho}=1)\}$. A pure state will be denoted by a ket $\ket{\psi}\in\mathcal{H}$. The trace distance between two quantum states $\rho$ and $\sigma$ is given by 
\begin{equation}
\frac{1}{2}\norm{\rho-\sigma}_1.
\end{equation}
We will write $\frac{1}{2}\norm{\ket{\psi_1}-\ket{\psi_2}}_1$
for pure states $\ket{\psi_1},\ket{\psi_2}$. The fidelity bewteen two quantum states $\rho$ and $\sigma$ is defined as
\begin{equation}
    F(\rho,\sigma):=\tr{\sqrt{\sqrt{\sigma}\rho\sqrt{\sigma}}},
\end{equation}
in particular, for pure states $\ket{\psi_1},\ket{\psi_2}$, $F(\ket{\psi_1},\ket{\psi_2})=\abs{\braket{\psi_1}{\psi_2}}$. The purified distance for quantum states $\rho$ and $\sigma$ is defined as
\begin{equation}
    \mathcal{P}(\rho,\sigma):=\sqrt{1-F(\rho,\sigma)^2}.
\end{equation}

\begin{definition} Let $X$ be a continuous random variable with probability density function $f(x)$, and let $\mathcal{X}$ be its support set. The \emph{differential Shannon entropy} $h(X)$ is defined as
\begin{equation}
    h(X)=-\int_{\mathcal{X}}f(x)\log f(x) \, \mathrm{d}x,
\end{equation}
    where, if not otherwise mentioned, we use $log$ in base 2. 
\end{definition}
\begin{lemma}
\label{lemma:hscaling}
Let $\beta >0$ and $X\in\mathbb R$. It holds that
$h(\beta X) = h(X)+\log\beta$.
\end{lemma}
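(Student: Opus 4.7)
The plan is to carry out a direct change of variables in the integral defining differential entropy. Since $\beta>0$, the map $x\mapsto \beta x$ is a smooth monotone bijection on $\mathbb{R}$, so the standard density transformation rule applies: if $X$ has density $f_X$, then $Y:=\beta X$ has density $f_Y(y)=\beta^{-1}f_X(y/\beta)$ on the support $\beta\mathcal{X}$.

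Next I would substitute this expression into the definition of $h(Y)$. This gives
\begin{equation*}
h(\beta X) = -\int_{\beta\mathcal{X}} \frac{1}{\beta}f_X(y/\beta)\,\log\!\left(\frac{1}{\beta}f_X(y/\beta)\right)\mathrm{d}y.
\end{equation*}
Then I would perform the substitution $x=y/\beta$, so $\mathrm{d}y=\beta\,\mathrm{d}x$, which absorbs the prefactor $1/\beta$. Splitting the logarithm into $\log f_X(x) - \log\beta$ and using $\int_{\mathcal{X}} f_X(x)\,\mathrm{d}x = 1$, the two resulting terms give exactly $h(X)$ and $+\log\beta$ respectively.

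There is essentially no obstacle here; the only point worth flagging is the sign of $\beta$. The assumption $\beta>0$ is what makes the Jacobian $|\beta|=\beta$ and keeps the orientation of the integration region, so no absolute value appears in $\log\beta$. A brief remark could note that for general $\beta\neq 0$ the analogous identity reads $h(\beta X)=h(X)+\log|\beta|$, but since the lemma only claims the positive case, this is not needed in the proof.
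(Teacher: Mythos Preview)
Your proof is correct and is the standard change-of-variables argument. The paper itself states this lemma without proof, treating it as a well-known elementary fact about differential entropy; your derivation supplies exactly the routine verification one would expect.
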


As introduced in \cite{furrer2014position}, let $\rho_{AB}$ be a bipartite state on systems $A$ and $B$, which correspond to a system to be measured and a system held by an observer. Let $X$ be a continuous random variable, $\alpha=2^{-n}$ for some $n\in\mathbb N$, and consider the intervals $\mathcal{I}_{k;\alpha}:=(k\alpha,(k+1)\alpha]$ for $k\in\mathbb Z$.  Here $\rho_B^{k;\alpha}$ denotes the sub-normalized density matrix in $B$ when $x$ is measured in $\mathcal{I}_{k;\alpha}$, $\rho_B^x$ denotes the conditional reduced density matrix in $B$ so that $\int_{\mathcal{I}_{k;\alpha}}\rho_B^xdx=\rho_B^{k;\alpha}$, and $Q_{\alpha}$ denotes the random variable that indicates which interval $x$ belongs to. These notions are used in the continuous version of the conditional entropy.

\begin{definition}
    The \emph{quantum conditional von Neumann entropy} is defined as 
    \begin{equation}
        H(Q_{\alpha}|B)_\rho:=-\sum_{k\in\mathbb Z}D(\rho^{k;\alpha}_{B}||\rho_B).
    \end{equation}
\end{definition}

\begin{definition} The  \emph{differential quantum conditional von Neumann entropy} is defined as
\begin{equation}
    h(X|B)_{\rho} := -\int_{\mathbb R} D(\rho_B^x || \rho_B) \, \mathrm{d}x. 
\end{equation}
\end{definition}
    
\noindent The basis of our security proofs is the quantum-mechanical uncertainty principle. We use the following form for the differential entropy in a tripartite setting of a guessing game, as is often useful in the context of quantum cryptography. 

\begin{lemma} \label{lemma:uncertainty}\cite{furrer2014position} Let $\rho_{ABC}$ be a tripartite density matrix on systems $A$, $B$ and $C$. Let $Q$ and $P$ denote the random variables of position and momentum respectively, resulting from a homodyne measurement on the $A$ system and let the following hold: $h(Q|B)_{\rho}, h(P|C)_{\rho}>-\infty$ and $H(Q_{\alpha}|B)_{\rho}, H(P_{\alpha}|C)_{\rho}<\infty$ for any $\alpha>0$. Then\begin{equation}    h(Q|B)_{\rho}+h(P|C)_{\rho}\geq\log(2\pi).\end{equation}\end{lemma}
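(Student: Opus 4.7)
The plan is to obtain this tripartite differential entropic uncertainty relation as the $\alpha \to 0$ limit of the corresponding \emph{discrete} tripartite uncertainty relation applied to binned position/momentum measurements, exactly in the style of Berta–Christandl–Colbeck–Renes–Renner and its continuous extension by Furrer et al. The idea is that the discrete conditional entropy of the coarse-grained quadrature converges to the differential conditional entropy once the logarithm of the bin-width is subtracted.

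First I would apply the discrete tripartite entropic uncertainty relation
\begin{equation}
    H(Q_{\alpha}|B)_\rho + H(P_{\alpha}|C)_\rho \;\geq\; \log\frac{1}{c(\alpha)},
\end{equation}
where $c(\alpha)$ is the maximum squared overlap between a position-bin projector and a momentum-bin projector, both of width $\alpha$. This inequality follows from the standard Maassen–Uffink/BCCRR machinery once one checks that position and momentum projectors onto bins of width $\alpha$ form genuine PVMs acting on $A$; the discrete tripartite version with memories $B$, $C$ then applies unchanged.

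Next I would invoke the explicit expression for the bin overlap: for position and momentum with bin-width $\alpha$ one has $c(\alpha) \leq \tfrac{\alpha^{2}}{2\pi}\,S_{0}^{(1)}(1,\alpha^{2}/4)^{2}$, which in particular satisfies $c(\alpha)\leq \alpha^{2}/(2\pi)$ (this bound is precisely the one used in \cite{furrer2014position}). Plugging this in gives
\begin{equation}
    H(Q_{\alpha}|B)_\rho + H(P_{\alpha}|C)_\rho \;\geq\; \log(2\pi) - 2\log\alpha .
\end{equation}

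Then I would pass to the limit $\alpha \to 0$ using the relation between discrete and differential conditional entropy, namely
\begin{equation}
    \lim_{\alpha \to 0}\bigl(H(Q_{\alpha}|B)_\rho + \log\alpha\bigr) = h(Q|B)_\rho,
\end{equation}
and similarly for $(P,C)$. The hypotheses $h(Q|B)_\rho, h(P|C)_\rho > -\infty$ and $H(Q_{\alpha}|B)_\rho, H(P_{\alpha}|C)_\rho < \infty$ are exactly what is needed so that both sides of this identity are finite and the limit is well-defined; this equality is a conditional analogue of the classical fact $h(X) = \lim_{\alpha \to 0}(H(X_\alpha) + \log\alpha)$, extended to the quantum conditional setting via the Kullback–Leibler representation used in the definitions of $H(Q_\alpha|B)$ and $h(Q|B)$ above. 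Adding the two $\log\alpha$'s on the left cancels the $-2\log\alpha$ on the right, leaving $h(Q|B)_\rho + h(P|C)_\rho \geq \log(2\pi)$.

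The main obstacle is the last step: justifying rigorously that $H(Q_\alpha|B)_\rho + \log\alpha$ converges to $h(Q|B)_\rho$ as $\alpha \to 0$, rather than merely that the two quantities differ by $-\log\alpha$ up to a bounded error. This requires monotone/dominated convergence arguments on the relative-entropy integrals $\int D(\rho_B^x\|\rho_B)\,\mathrm dx$ versus $\sum_k D(\rho_B^{k;\alpha}\|\rho_B)$, together with the finiteness hypotheses in the statement to rule out pathological behavior; the position–momentum bound and the discrete tripartite uncertainty relation are then essentially black-boxes on top of this analytic core.
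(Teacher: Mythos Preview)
The paper does not prove this lemma at all: it is stated as a quotation from \cite{furrer2014position} and used as a black box, so there is no in-paper proof to compare against. Your sketch is essentially the argument given in that reference (discrete tripartite Maassen--Uffink/BCCRR with binned position/momentum, overlap bound $c(\alpha)\le\alpha^2/(2\pi)$, then the $\alpha\to0$ limit using $H(Q_\alpha|B)+\log\alpha\to h(Q|B)$ under the stated finiteness hypotheses), so your approach is correct and aligned with the cited source.
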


We will make use of a type of Alicki-Fannes \cite{Alicki_2004} inequality for continuity of the conditional entropy for continuous variables in terms of the energy as shown in \cite{Winter_2016}.  Consider the Hamiltonian on a system $A$ being the harmonic oscillator with 
\begin{equation}
    H=\hbar \omega \hat{N},
\end{equation}
with the unusual energy convention that the ground state has energy $0$ instead of $\frac{1}{2}\hbar \omega$. Throughout the paper, we will consider units such that $\hbar \omega =1$ (note that we will consider a fixed wavelength for an input state, see below).

\begin{lemma} \label{lemma:close_states_close_entropy}(Lemma 18, \cite{Winter_2016}) Let $\alpha\in[0,\frac{1}{2}]$. Consider a Hamiltonian $H=H_A\otimes \mathbb I_B$, with system $A$ composed of one harmonic oscillator and arbitrary system $B$. Let there be states $\rho$ and $\sigma$ on the bipartite system $\mathcal{H}_A\otimes\mathcal H_B$ with $\tr{\rho H},\tr{\sigma H}\leq E$. If $\frac{1}{2}\norm{\rho-\sigma}_1\leq \Tilde{\varepsilon}$, then
    \begin{equation}
        \abs{h(A|B)_{\rho}-h(A|B)_{\sigma}}\leq
        \left(\frac{1+\alpha}{1-\alpha}+2\alpha\right)
        \left[2\Tilde\varepsilon \left(\log({E}+1)+\log\frac{e}{\alpha(1-\Tilde\varepsilon)}\right)+6\Tilde{h}\left(\frac{1+\alpha}{1-\alpha}\Tilde{\varepsilon}\right)\right],
    \end{equation}
    where
    \begin{equation}
    \Tilde{h}(x):=
        \begin{cases}
            -x\log x-(1-x)\log(1-x) &\text{ if } x\leq\frac{1}{2}\\
            1 &\text{ if } x\geq \frac{1}{2}.
        \end{cases}
    \end{equation}
\end{lemma}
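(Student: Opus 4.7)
The statement is Winter's continuity bound for the conditional von Neumann entropy for infinite-dimensional systems with an energy constraint, and the natural approach is to reduce to a finite-dimensional Alicki--Fannes--Winter (AFW)-type inequality by letting the energy $E$ play the role of an effective log-dimension. The plan is first to reduce the conditional-entropy bound to an inequality about unconditional entropies via a triangle inequality together with data processing for the reduced states on $B$, since one has $h(A|B) = h(AB) - H(B)$ (in the appropriate differential sense on $A$). Because $H$ acts only on $A$, any mixing done on $A$ does not change the $B$-marginal except through the joint state, so most of the work is in controlling $h(AB)$ and a compatible expression on the marginal.

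Next I would introduce the standard \emph{auxiliary-state trick}: for the parameter $\alpha\in(0,\tfrac12]$ appearing in the bound, form the convex combinations
\begin{equation}
\rho' := \tfrac{1}{1+\alpha}\rho + \tfrac{\alpha}{1+\alpha}\sigma,\qquad \sigma' := \tfrac{1}{1+\alpha}\sigma + \tfrac{\alpha}{1+\alpha}\rho,
\end{equation}
(or analogous mixtures with a reference Gibbs state of energy $\leq E$), which are mutually closer than $\rho,\sigma$ and satisfy $\sigma' \geq \tfrac{\alpha}{1+\alpha}\rho$ and vice versa. This operator dominance is what upgrades the naive Fannes-type bound to an AFW-type bound, allowing one to absorb the difficult small-eigenvalue contributions into a bounded $\tilde h$-term. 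The prefactor $\tfrac{1+\alpha}{1-\alpha}+2\alpha$ in the statement is exactly what comes out of rewriting $\rho,\sigma$ in terms of $\rho',\sigma'$ and applying concavity of the conditional entropy; the shifted trace distance $\tfrac{1+\alpha}{1-\alpha}\tilde\varepsilon$ in the $\tilde h$ argument is the trace distance between these mixtures.

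For the effective-dimension input I would invoke the fact that the Gibbs state of the harmonic oscillator at energy $\leq E$ maximises the von Neumann entropy among states with that energy constraint, with value $\approx \log(E+1) + \text{const}$. Combining this with a pinching onto the spectrum of $H_A$ produces the log-dimension surrogate $\log(E+1)$ that replaces the finite $\log d$ of the ordinary AFW inequality; the $\log\tfrac{e}{\alpha(1-\tilde\varepsilon)}$ correction accounts for the normalisation cost of the auxiliary mixture together with the standard $\log e$ from a Gibbs-state relative-entropy estimate.

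The main obstacle I anticipate is the careful handling of the infinite-dimensional subtleties: one has to ensure that all entropies and relative entropies appearing along the way are finite, that pinching an energy-constrained state does not alter the entropy by more than the allowed $O(\tilde\varepsilon \log(E+1))$, and that the conditional-entropy version inherits the bound uniformly in the $B$-system (which may be arbitrary and infinite-dimensional). The optimisation over $\alpha$ is a secondary but delicate issue: $\alpha$ must be small enough that the shifted trace distance $\tfrac{1+\alpha}{1-\alpha}\tilde\varepsilon$ remains below $\tfrac12$ so that $\tilde h$ is informative, yet large enough that the operator-domination $\sigma'\geq\tfrac{\alpha}{1+\alpha}\rho$ gives a useful AFW-type estimate; leaving $\alpha$ free in the statement is what allows the user to later tune the bound to the regime of interest.
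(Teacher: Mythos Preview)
The paper does not contain a proof of this lemma at all: it is stated with the attribution ``(Lemma 18, \cite{Winter_2016})'' and then simply used as a black box in the proof of Proposition~4.4. So there is no ``paper's own proof'' to compare your proposal against; the authors defer entirely to Winter's original argument.

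That said, your sketch is a reasonable high-level summary of the strategy Winter actually uses in that reference. The auxiliary-state trick with convex combinations governed by the parameter~$\alpha$, the use of the Gibbs state of the harmonic oscillator as the maximum-entropy state under the energy constraint (which is where $\log(E+1)$ enters as the surrogate for $\log d$), and the $\tilde h$-term arising from the Alicki--Fannes--Winter machinery are all present in Winter's proof. One minor inaccuracy: the specific mixtures you wrote down are not quite the ones Winter uses; he mixes $\rho$ and $\sigma$ with a carefully chosen third state (related to the Gibbs state and to the difference $\rho-\sigma$), not simply with each other. The operator-dominance you mention is obtained differently, via a decomposition of $\rho-\sigma$ into positive and negative parts. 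Your outline also glosses over the most technical step, namely the passage from the unconditional entropy bound to the conditional one in infinite dimensions, which in Winter's paper requires a separate argument since $h(A|B)=h(AB)-H(B)$ can fail to make sense when both terms are infinite. But as a roadmap for reconstructing the proof, your proposal points in the right direction; for the purposes of the present paper, however, no proof is needed beyond the citation.
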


Furthermore, we will make use of the following estimation inequality.

\begin{theorem} \cite{cover1999elements}\label{theorem:fano}
    Let $X$ be a random variable and $\hat{X}(Y)$ an estimator of $X$ given side information $Y$, then
    \begin{equation}
        \mathbb{E} \left[ \left(X-\hat{X}(Y)\right)^2 \right] \geq \frac{1}{2\pi e}e^{2h_{\mathrm{nats}}(X|Y)},
    \end{equation}
    where $h_{\mathrm{nats}}(X|Y)$ is the conditional entropy in natural units. Moreover, if $X$ is Gaussian and $\hat{X}(Y)$ is its mean, then equality holds. 
\end{theorem}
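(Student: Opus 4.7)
The plan is to prove this by a standard three-step chain: reduce to the entropy of the estimation error, bound the error entropy by the Gaussian maximum-entropy inequality, and relate variance to mean-square. First I would set $Z := X - \hat{X}(Y)$ and observe that, since $\hat{X}(Y)$ is a function of $Y$, shifting does not affect conditional differential entropy, so $h_{\mathrm{nats}}(X\mid Y) = h_{\mathrm{nats}}(Z\mid Y)$. Conditioning reduces entropy, hence $h_{\mathrm{nats}}(Z\mid Y) \leq h_{\mathrm{nats}}(Z)$.

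The key inequality is the Gaussian maximum-entropy bound: for any real random variable $Z$ with finite variance $\mathrm{Var}(Z)$, one has $h_{\mathrm{nats}}(Z) \leq \tfrac{1}{2}\ln\bigl(2\pi e\,\mathrm{Var}(Z)\bigr)$, with equality iff $Z$ is Gaussian. I would justify this in one line by noting that the relative entropy $D(p_Z \Vert \mathcal{N}(\mathbb{E}[Z],\mathrm{Var}(Z))) \geq 0$ expands exactly to this inequality. Next, since $\mathrm{Var}(Z) = \mathbb{E}[Z^2] - (\mathbb{E}[Z])^2 \leq \mathbb{E}[Z^2]$, I can combine the three steps into
\begin{equation}
h_{\mathrm{nats}}(X\mid Y) \;\leq\; \tfrac{1}{2}\ln\!\left(2\pi e\,\mathbb{E}\!\left[(X-\hat{X}(Y))^2\right]\right),
\end{equation}
and exponentiating both sides yields the desired lower bound on the mean-square error.

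For the equality claim, I would trace back when each of the three inequalities becomes tight. If $(X,Y)$ is jointly Gaussian and $\hat{X}(Y)=\mathbb{E}[X\mid Y]$, then $Z = X - \mathbb{E}[X\mid Y]$ is Gaussian, is \emph{independent} of $Y$ (a classical property of jointly Gaussian vectors), and has zero mean. Independence makes the conditioning step $h(Z\mid Y)=h(Z)$ an equality; Gaussianity makes the maximum-entropy bound an equality; and zero mean makes $\mathrm{Var}(Z) = \mathbb{E}[Z^2]$. Hence equality holds throughout.

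There is no serious obstacle: every step is a textbook fact. The only subtlety worth flagging explicitly is that, in the equality statement, one should interpret ``$\hat{X}(Y)$ is its mean'' as the conditional mean $\mathbb{E}[X\mid Y]$ (and that $(X,Y)$ be jointly Gaussian so that this conditional mean is well-defined as a linear function of $Y$); otherwise the equality fails in general. Apart from this interpretive remark, the proof is a straightforward assembly of the three inequalities above.
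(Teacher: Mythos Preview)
Your proof is correct and is exactly the standard argument from Cover and Thomas: translation invariance of conditional differential entropy, the ``conditioning reduces entropy'' step, the Gaussian maximum-entropy bound, and $\mathrm{Var}(Z)\leq\mathbb{E}[Z^2]$, together with the equality analysis for the jointly Gaussian case with $\hat X(Y)=\mathbb{E}[X\mid Y]$. Note that the paper does not supply its own proof of this theorem; it is simply quoted from \cite{cover1999elements}, so there is nothing further to compare against.
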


\section{The \texorpdfstring{\QPVcohf}{QPVcohf}~protocol}
Based on the ideas in \cite{Buhrman_2013_Graden_hose,Unruh_2014_QPV_random_oracle,bluhm2022single},  
we introduce a variation of the quantum position verification protocol studied in \cite{allerstorfer2023security}. Instead of only a single verifier sending classical information, both verifiers send classical information that, combined, determine the action of the prover. When sent through a channel, a continuous-variable state gets attenuated and acquires excess noise. We will denote by $t\in[0,1]$ the attenuation parameter, and by  $u\geq 0$  the excess noise power of the quantum channel connecting $V_0$ and $P$. 
Then, the protocol is described as follows:

\begin{definition} \label{def qpv bb84 f} Let $n\in\mathbb{N}$ and  consider a $2n$-bit boolean function $f:\{0,1\}^n \times \{0,1\}^n \to  \{0,1\}$.  A round of the \QPVcohf~protocol is described as follows.
\begin{enumerate}
\item The verifiers $V_0$ and $V_1$ randomly choose bit strings $x,y\in\{0,1\}^n$, respectively. They draw two random variables $(r,r^\perp)$ from the Gaussian distribution ${\cal N}_{0,\sigma^2}$, for $\sigma\gg1$, and compute $f(x,y)$.  Verifier $V_0$ prepares a coherent state $\ket\psi$ 
with quadratures $(x_0,p_0) = (r\cos\theta + r^\perp\sin\theta, \; r\sin\theta-r^\perp\cos\theta)$, where $\theta=0$ if $f(x,y)=0$ and $\theta=\frac{\pi}{2}$ if $f(x,y)=1$.
\item The verifier $V_0$ sends $\ket\psi$ and $x$ to $P$, and the verifier $V_1$ sends $y$ to $P$ such that all information arrives at $P$ simultaneously. The classical information is required to travel at the speed of light whereas the quantum information can be arbitrarily slow. 

\item Immediately, $P$ computes $f(x,y)$ and 
performs a homodyne measurement on $\ket\psi$ in the direction $\theta=0$ if $f(x,y)=0$ or $\theta=\frac{\pi}{2}$ if $f(x,y)=1$, resulting in a value $r'\in{\mathbb R}$.
The prover broadcasts $r'$ to both verifiers at the speed of light.

\end{enumerate}
After $N$ rounds, the verifiers have received a sample of responses, which we denote as $(r_i')_{i=1}^N$. The verifiers check whether all prover responses arrived at the correct time,
and whether the reported values $(r_i')_{i=1}^N$  satisfy
\begin{equation}
    \frac1N\sum_{i=1}^N \frac{\left(r_i'-r_i\sqrt t\right)^2}{\frac12+u} <  \gamma, \quad \mbox{ with }  \quad
    \gamma\stackrel{\rm def}{=} 1 +\frac2{\sqrt N}\sqrt{\ln\frac1{\eps_{\rm hon}}} + \frac{2}{N}\ln\frac1{\eps_{\rm hon}}.
\label{score}
\end{equation}
Here $\eps_{\rm hon}$ is an upper bound on the honest prover's failure probability (as a parameter of the protocol). See Fig.~\ref{fig:protocol} for a schematic representation of the \QPVcohf~protocol. 

\end{definition}

\begin{figure}[h]
    \centering
    \begin{tikzpicture}[node distance=3cm, auto]
    \node (A) {$V_0$};
    \node [left=1cm of A] {};
    \node [right=of A] (B) {};
    \node [right=of B] (C) {$V_1$};
    \node [right=1cm of C] {};
    \node [below=of A] (D) {};
    \node [below=of B] (E) {P}; 
    
    \node [right=1.2cm of A] (M) {};
    
    \node [below=of C] (F) {};
    \node [below=of D] (G) {$V_0$};
    \node [below=of E] (H) {};
    \node [below=of F] (I) {$V_1$};
    \node [left= 6cm of E] (J) {};
    \node [below= 3cm of J] (K) {};
    \node [above= 3cm of J] (L) {};

    \node [right=1.05cm of A, transform canvas={xshift=+0pt,yshift=+6pt}] {$\diagdots[117]{2em}{0.1em}$};

    \draw [->, transform canvas={xshift=0pt, yshift = 0 pt}, quantum] (M) -- (E) node[midway] (x) {} ;
    %\draw [->, transform canvas={xshift=+ 2pt, yshift = +2 pt}] (A) -- (E) node[midway] (x) {$x \in \{0,1\}^n$};
    \draw [->] (A) -- (E);
    \draw [->] (C) -- (E);
    \draw [->] (E) -- (I) node[midway] (q) {$r'$};
    \draw [->] (E) -- (G);

    \draw [->] (L) -- (K) node[midway] {time};

    \node[left=0.3cm of x, transform canvas={xshift=+ 2pt, yshift = +2 pt}] {$\ket\psi$};
    \node[left=1.5cm of x, transform canvas={xshift=+ 2pt, yshift = +2 pt}] {$x\in\{0,1\}^n$};
    \node[right = 2.9cm of x, transform canvas={xshift=+ 2pt, yshift = +2 pt}] {$y \in \{0,1\}^n$};
    \node[left = 3.3cm of q] {$r'$};
\end{tikzpicture}
\caption{One round of the \QPVcohf~protocol. The coherent state $\ket\psi$ originates from $V_0$ in the past.}
\label{fig:protocol}
\end{figure}
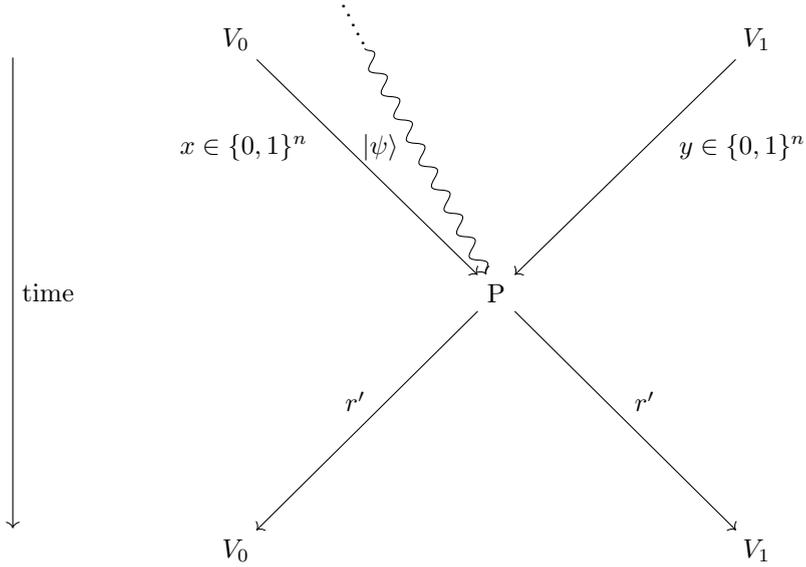
The excess noise can be modeled as $u=u_0\sigma^2$, with, for instance, a reasonable $u_0=0.01$, due to the prevalence of phase noise \cite{noisemodels}. The protocol becomes insecure for $u>0.25$, and thus the constant $u_0$ places a practical upper limit on the modulation variance $\sigma$.

The protocol in \cite{allerstorfer2023security} is as in Definition~\ref{def qpv bb84 f} but instead of sending $x$ and $y$, $V_1$ directly sends $\theta$. For the honest party, the only difference is that he has to compute $f(x,y)$ to determine $\theta$. 
We will use the following notation, for $\theta\in\{0,\frac{\pi}{2}\}$ we will denote by $\bar\theta$ the remaining value, i.e.\ if $\theta=0$, then $\bar\theta=\frac{\pi}{2}$ and if    $\theta=\frac{\pi}{2}$, then $\bar\theta=0$.

The honest prover's uncertainty about the displacements $r$, drawn by the random variable denoted by $R$ in each round, conditioned on his measurement outcomes~$r'$, drawn from the random variable denoted by $R'$ in each round, is the same as in the protocol where $\theta$ is sent directly from $V_1$, since the only change for $P$ is to compute $f(x,y)$ to determine $\theta$. This uncertainty was calculated in \cite{allerstorfer2023security} to be
\begin{equation}
\label{entropy prover}
    h(R|R')_{\psi} = \frac{1}{2} \log 2\pi e\Sigma^2,
\end{equation}
where
\begin{equation}
    \Sigma^2 = \left(\frac{1}{\sigma^2}+\frac{t}{1/2+u}\right)^{-1}.
\end{equation}
It is well known that the preparation of a coherent state with Gaussian distributed displacements $x_0,p_0\sim {\cal N}_{0,\sigma^2}$ is equivalent to preparing a two-mode squeezed state with squeezing parameter $\sinh^{-1}\sigma$ and then performing
a heterodyne $(\hat x,\hat p)$ measurement on one mode, with measurement outcome 
$\frac{(x_0, -p_0)}{\sqrt2 \tanh\sinh^{-1}\sigma}$. For notational brevity, we set $\lambda= \tanh\sinh^{-1}\sigma$.

In the purified version of \QPVcohf, the verifier $V_0$ prepares the two-mode squeezed state $\ket{\Psi}_{V_0P}$  with the above mentioned $\lambda$-dependent squeezing, which is given by
\begin{equation}
 \ket{\Psi}=\sqrt{1-\lambda^2}\sum_{m=0}^{\infty}\lambda^m\ket{mm},
\end{equation}
in the Fock space. Notice that $\lambda<1$. The verifier $V_0$ performs a heterodyne measurement with quadratures  rotated by an angle~$\theta$ on their register.
The measurement outcomes are $r/(\sqrt2 \lambda )$ and $-r^\perp/(\sqrt2 \lambda)$, resulting in displacement
$(r,r^\perp)$ in the state sent to the prover $P$. $P$ then performs a homodyne measurement 
under angle $\theta$ to recover $r$, as in the original protocol. 

$V_0$'s heterodyne measurement can be described as a double-homodyne measurement. First $V_0$ mixes its own mode with the vacuum using a beamsplitter, resulting in a two-mode state. 
On one of these modes, $V_0$ then performs a homodyne measurement in the $\theta$-direction, on the other mode in the $\theta+\frac\pi2$ direction.

In \cite{allerstorfer2023security} it is shown that the honest prover's uncertainty about $R$ evaluates to
\begin{equation}
    h(R|P)_{\Psi} = \frac{1}{2}\log \frac{\pi e (1+2u)}{t} + O\left(\frac1\sigma\right).
\end{equation}
Let $U=R/(\lambda\sqrt2)$ be the displacement in the $\theta$ direction as measured by $V_0$.

Then 
\begin{equation}\begin{split}
     h(U|P)_{\Psi} &= h\left(\frac{R}{\sqrt2 \lambda}\Big|P\right)_{\Psi} = h(R|P)_{\Psi}-\log (\sqrt2 \lambda)\\&=     \frac{1}{2}\log \frac{\pi e (1+2u)}{t} + O\left(\frac1\sigma\right)-\log (\sqrt2 \lambda).
\end{split}
\end{equation}
In the regime $\sigma\gg1$ (i.e. $\lambda \to 1$),

\begin{equation}\label{equ:h_honest}
    h(U|P)_{\Psi}\rightarrow  h(U|P)_{\psi}=\frac{1}{2}\log \frac{\pi e (1+2u)}{2t}.
\end{equation}
Unless stated otherwise, we will work in the regime $\sigma \gg 1$. Recall that $\sigma$ is in control of the verifiers.

\section{Security against bounded entanglement}
In this section, we prove security of the \QPVcohf~protocol, showing that with high probability, attackers who possess CV entangled states with a cutoff at photon number linear in $n$ will not be able to attack the protocol. 

To do so, we consider an `imaginary world' where the \QPVcohf~protocol, instead of using the state $\ket\Psi$, is executed with a cutoff at photon number $2^{m_0}$ using the state $\ket{\Psi_{m_0}}$,  given by
\begin{equation}
    \ket{\Psi_{m_0}}=\sqrt{\frac{1-\lambda^2}{1-(\lambda^2)^{2^{m_{0}}}  }}  \sum_{m=0}^{2^{m_0}-1}\lambda^m\ket{mm}.
\end{equation}
We will denote this variation of the protocol by \QPVcohfm. The state $\ket{\Psi_{m_0}}$ is an approximation of the state $\ket{\Psi}$ and can be made arbitrary close to it by increasing $m_0$. Note that 
\begin{equation}
    \mathcal{P}(\ket\Psi,\ket{\Psi_{m_0}})=\lambda^{2^{m_0}},
\end{equation}
i.e. $\ket{\Psi_{m_0}}$ is double exponentially close (in $m_0$) to $\ket{\Psi}$ (recall that $\lambda<1$). If one replaces the state $\ket{\Psi}$ by $\ket{\Psi_{m_0}}$, the probability that the verifiers accept the action of an honest party will change with probability at most $O(\lambda^{2^{m_0}})$. The cutoff reduces the dimension of the Hilbert space from infinite to $2^{m_0}$, which is the dimension of an $m_0$-qubit state space.

The energy of the $V_0$ subsystem is given by
\begin{equation}\label{energy of V}
   \bra{\Psi_{m_0}} H_{V_0}\ket{\Psi_{m_0}}_{V_0P}=\frac{\lambda ^2+\left(2^{m_0}-1\right) \lambda ^{2^{m_0+1}+2}-2^{m_0} \lambda
   ^{2^{m_0+1}}}{\left(\lambda ^2-1\right) \left(\lambda ^{2^{m_0+1}}-1\right)}=\frac{2^{m_0} \left(\frac{\sigma ^2}{\sigma ^2+1}\right)^{2^{m_0}}}{\left(\frac{\sigma
   ^2}{\sigma ^2+1}\right)^{2^{m_0}}-1}+\sigma ^2,
\end{equation}
which tends to $\sigma^2$  (the expected energy of the challenge state chosen by the verifiers) 
as $m_0$ tends to infinity.

The most general attack to \QPVcohfm~ for adversaries with a photon-number cutoff such that their Hilbert space is isomorphic to a multi-qubit Hilbert space, consists of an adversary Alice between $V_0$ and $P$, and an adversary Bob between $V_1$ and $P$. They proceed as follows:
\begin{enumerate}
    \item \emph{Preparing:} The attackers prepare a joint (possibly entangled) CV state with a cutoff at the photon number of $q$ qubits each.  
    
    \item \emph{Intercepting:} Alice intercepts the quantum information sent from $V_0$. At this stage, $V_0$, Alice and Bob share a state $\ket{\chi}_{VPAA_cBB_c}$ of dimension $2^{2q+2m_0}$. 
    Here $V$ is the register kept by $V_0$, and $P$ is the challenge register that $V_0$ sends.
    Alice controls the registers $P$, $A$ and $A_c$, and Bob possesses the registers $B$ and $B_c$. 
    Moreover,  Alice and Bob intercept $x$ and $y$ and perform arbitrary quantum channels depending on the intercepted classical information: $U^x_{PAA_c}$ and $V^y_{BB_c}$, respectively, ending up with the state  $\ket{\phi}_{VPAA_cBB_c}$.
     
    \item \emph{Communicating:} Alice and Bob send a copy of $x$ and $y$ to the other attacker, respectively. Alice keeps registers $P$ and $A$ and sends register $A_c$ to Bob, and Bob keeps register $B$ and sends $B_c$ to Alice. 

    \item \emph{Measuring:} Upon receiving the information sent by the other party, Alice and Bob locally apply arbitrary POVMs $\{A^{xy}_{PAB_c}\}$ and $\{B^{xy}_{A_cB}\}$ to obtain classical answers, which will be sent to their closest verifier, respectively.
\end{enumerate}
Due to the Stinespring dilation, we can consider the quantum channels to be unitaries.

\begin{figure}[h]
    \centering
    \begin{tikzpicture}[node distance=4cm, auto]
    \node (A) {Alice};
    \node [right= of A] (B) {Bob};
    
    \node [below = 1cm of A ] (A_Intercepts) {$U^x_{PAA_c}$};
    \node [below = of A_Intercepts ] (A_answers) {$A^{xy}_{PAB_c}$};
    \node [right = 2cm of A ](middle){};
    \node [below = 0.8cm of middle ](middle2){$\ket{\chi}_{VPAA_cBB_c}$};
    \node [below = 1.4cm of middle ](middle2){$\ket{\phi}_{VPAA_cBB_c}$};
    \node [below = 1cm of A_answers ] (A_final) {};
    
    \node [below = 1cm of B ] (B_Intercepts) {$V^y_{BB_c}$};
    \node [below = of B_Intercepts ] (B_answers) {$B^{xy}_{A_cB}$};
    \node [below = 1cm of B_answers ] (B_final) {};

    \draw[dotted][-](A_Intercepts)--(B_Intercepts){};
    \node [left = 2cm of A] (t0){};
    \node [below = 7cm of t0] (t1){};
    \draw [->] (t0) -- (t1) node[midway] {time};
    \draw [->, transform canvas={xshift=0pt, yshift = 0 pt}, quantum] (A_Intercepts) -- (B_answers) node[midway] (x) {};
    \draw [->, transform canvas={xshift=0pt, yshift = 0 pt}, quantum] (B_Intercepts) -- (A_answers) node[midway] (x) {};

\end{tikzpicture}
\caption{Schematic depiction of a generic attack on \QPVcohfm.}
\label{fig:attack}
\end{figure}

\begin{definition} The tuple $\{\ket{\chi}_{VPAA_cBB_c},U^x_{PAA_c},V^y_{BB_c},A^{xy}_{PAB_c},B^{xy}_{A_cB}\}_{xy}$ is a \emph{$q$-qubit strategy for \QPVcohfm}. Moreover, we say that a $q$-qubit strategy for \QPVcohfm~is $(\varepsilon,l)$-perfect if for $l$ pairs of strings $(x,y)$, for $\theta\in\{0,\frac{\pi}{2}\}$,
\begin{equation}\label{eq:epsilon-perfect_strategy}
    h(U_{\theta}|PAB_c)_{\phi}\leq h(U|P)_{\psi}+\varepsilon \quad \text{and} \quad h(U_{\theta}| A_cB)_{\phi}\leq h(U|P)_{\psi}+\varepsilon.
\end{equation}  
\end{definition}
Notice that there is no $\theta$-dependence on the right-hand side of the inequality since $ h(U|P)_{\psi}$ does not depend on $\theta$. The parameter $\varepsilon$, see analysis below, will quantify the difference between the uncertainty of the honest prover and attackers. 
It will have to be picked depending on the number of rounds that the protocol is run sequentially. 
Next, we define `good' states to attack the protocol for either $\theta=0$ or $\theta=\pi/2$. 
We will see that a good state for the $\theta=0$ case cannot be too close (in trace distance) to a good state for the $\theta=\pi/2$ case. This restricts the attackers. 

\begin{remark}
   Notice that we consider strategies starting with the state $\ket{\chi}_{VPAA_cBB_c}$ instead of $\ket{\Psi_{m_0}}_{VP}\otimes\ket{\chi}_{AA_cBB_c}$. This will give more power to the attackers, but it will include the fact that the quantum information sent from $V_0$ can travel arbitrarily slow, and the attackers are allowed to modify $\ket{\Psi_{m_0}}_{VP}\otimes\ket{\chi}_{AA_cBB_c}$ to end up with any arbitrary state $\ket{\chi}_{VPAA_cBB_c}$.
\end{remark}

\begin{definition}
    Let $\varepsilon\geq 0$, and let $q$ be the number of qubits that Alice and Bob each hold at the \emph{preparing} stage of a multi-qubit attack on \QPVcohfm. We define $\mathcal{S}_{\theta}^{\varepsilon}$ as
    \begin{equation}
    \begin{split}
        \mathcal{S}_{\theta}^{\varepsilon}:=\{\ket{\phi}_{VPAA_cBB_c}\in\mathbb C^{2q+2m_0}\mid \exists\textrm{ POVMs }\{A^{xy}_{PAB_c}\},\{B^{xy}_{A_cB}\} \text{ s.t. \eqref{eq:epsilon-perfect_strategy} holds}\}.
        \end{split}
    \end{equation}
   
\end{definition}

\begin{prop}
Let $\varepsilon>0$, and  $\ket{\phi_{0}}_{VPAA_cBB_c}\in \mathcal{S}_{0}^{\varepsilon}$, and $\ket{\phi_{\frac{\pi}{2}}}_{VPAA_cBB_c}\in \mathcal{S}_{\frac{\pi}{2}}^{\varepsilon}$, with bounded energies $\tr{\rho^0H},\tr{\rho^1H}\leq E$, where $\rho^0$ and $\rho^1$ are the respective density matrices of $\ket{\phi_{0}}$ and $\ket{\phi_{\frac{\pi}{2}}}$. The corresponding Hamiltonian is the harmonic oscillator on system $V$ and identity on the other systems. Let $\frac{1}{2}\geq\alpha\geq0$ and $\Tilde\varepsilon>0$ be such that

\begin{equation}\label{eq:condition_alpha_epsilon}
\varepsilon<\frac{1}{2}\log\frac{4t}{e(1+2u)}-\left(\frac{1+\alpha}{2(1-\alpha)}+\alpha\right)
        \left[2\Tilde\varepsilon \left(\log(E     +1)+\log\frac{e}{\alpha(1-\Tilde\varepsilon)}\right)+6\Tilde{h}\left(\frac{1+\alpha}{1-\alpha}\Tilde{\varepsilon}\right)\right].
\end{equation}
 Then, 
\begin{equation}
    \frac{1}{2}\lVert \ket{\phi_{0}}-\ket{\phi_{\frac{\pi}{2}}}\rVert_1 > \Tilde\varepsilon.
\end{equation}
\end{prop}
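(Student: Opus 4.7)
The plan is to proceed by contradiction: assume $\tfrac12\|\ket{\phi_0}-\ket{\phi_{\pi/2}}\|_1 \le \tilde\varepsilon$ and deduce a contradiction with the condition \eqref{eq:condition_alpha_epsilon} on $\varepsilon$. The skeleton is the standard one for this kind of QPV lower bound: cross between the two candidate states via trace-distance continuity of the conditional entropy, then invoke the uncertainty relation to close the argument.

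First I would apply the entropic uncertainty principle (Lemma~\ref{lemma:uncertainty}) to the state $\ket{\phi_{\pi/2}}$ with $A=V$, $B=PAB_c$, $C=A_cB$, using that $U_0$ and $U_{\pi/2}$ are conjugate quadratures of $V$:
\begin{equation}
h(U_0\mid PAB_c)_{\phi_{\pi/2}} + h(U_{\pi/2}\mid A_cB)_{\phi_{\pi/2}} \;\ge\; \log(2\pi).
\end{equation}
By $\ket{\phi_{\pi/2}}\in\mathcal{S}_{\pi/2}^{\varepsilon}$, the second term is at most $h(U\mid P)_\psi + \varepsilon$, so the task reduces to upper-bounding $h(U_0\mid PAB_c)_{\phi_{\pi/2}}$.

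This is where the trace-distance assumption enters. Using Lemma~\ref{lemma:close_states_close_entropy} with the single-mode harmonic oscillator Hamiltonian $H=H_V\otimes\mathbb{I}$, the energy bounds $\tr{\rho^0 H},\tr{\rho^1H}\le E$, and the assumed $\tilde\varepsilon$-closeness, I get
\begin{equation}
\bigl|h(U_0\mid PAB_c)_{\phi_0}-h(U_0\mid PAB_c)_{\phi_{\pi/2}}\bigr| \;\le\; \Delta,
\end{equation}
where $\Delta$ is exactly the bracketed quantity in \eqref{eq:condition_alpha_epsilon} multiplied by the full factor $\frac{1+\alpha}{1-\alpha}+2\alpha$. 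Since $\ket{\phi_0}\in\mathcal{S}_0^\varepsilon$ gives $h(U_0\mid PAB_c)_{\phi_0}\le h(U\mid P)_\psi+\varepsilon$, combining yields $h(U_0\mid PAB_c)_{\phi_{\pi/2}}\le h(U\mid P)_\psi+\varepsilon+\Delta$.

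Summing, substituting $h(U\mid P)_\psi=\tfrac12\log\tfrac{\pi e(1+2u)}{2t}$ from \eqref{equ:h_honest}, and rearranging gives
\begin{equation}
\varepsilon \;\ge\; \tfrac12\log\tfrac{4t}{e(1+2u)} - \tfrac{\Delta}{2},
\end{equation}
which is precisely the negation of \eqref{eq:condition_alpha_epsilon} (the factor $\tfrac12$ converts $\frac{1+\alpha}{1-\alpha}+2\alpha$ into the $\frac{1+\alpha}{2(1-\alpha)}+\alpha$ appearing there). This contradiction finishes the proof. The main delicate point I would want to double-check is the applicability of Lemma~\ref{lemma:close_states_close_entropy} directly to the differential entropy $h(U_\theta\mid \cdot)$ of a homodyne outcome rather than to a pure von Neumann conditional entropy; this is handled by regarding the homodyne outcome as a classical register produced by a measurement channel acting only on $V$, so that the energy constraint on the $V$-marginal still controls the continuity bound, and the measurement channel is the same on both states and thus cannot increase their trace distance.
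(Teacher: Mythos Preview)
Your proposal is correct and follows essentially the same approach as the paper: combine the entropic uncertainty relation (Lemma~\ref{lemma:uncertainty}) with the energy-constrained continuity bound (Lemma~\ref{lemma:close_states_close_entropy}) to force a gap between the two states. The only cosmetic difference is that the paper phrases the last step as the contrapositive of Lemma~\ref{lemma:close_states_close_entropy} rather than as a contradiction, and compares $h(U_{\bar\theta}\mid A_cB)_{\rho^\theta}$ with $h(U_{\bar\theta}\mid PAB_c)_{\rho^{\bar\theta}}$; your version, which keeps the conditioning register fixed at $PAB_c$ on both sides of the continuity bound, is in fact the cleaner way to write the same argument.
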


Notice that the energy bound $E$ is the energy given by the Hamiltonian corresponding to one harmonic oscillator in the $V$ system (\ref{energy of V}), which approaches $\sigma^2$ from below. Moreover, from \eqref{eq:condition_alpha_epsilon}, we see that the $\varepsilon$ will need to be picked taking a value at most $ \frac{1}{2}\log\frac{4t}{e(1+2u)}$. In order to have non-negative $\Tilde{\varepsilon}$, we need 
\begin{equation}\label{eq:relation_t_u}
    4t>e(1+2u),
\end{equation}
see Fig.~\ref{fig:u_t_relation}. The maximum value of ${\varepsilon}$ will be upper bounded by 
\begin{equation}
    \varepsilon<\frac{1}{2}\log\frac{4t}{e(1+2u)}\leq\frac{1}{2}\log\frac{4}{e}\simeq0.278652,
\end{equation}
since the maximum value is reached by  $t=1$ and $u=0$. 
\begin{figure}
    \centering
    \includegraphics[width=80mm]{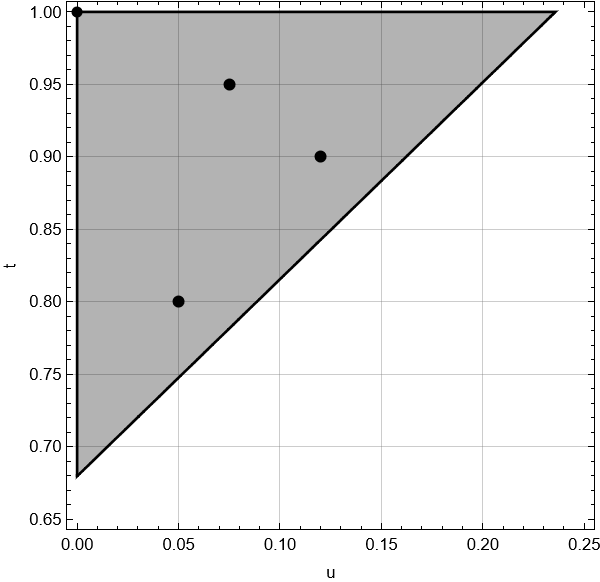}
    \caption{Necessary condition for $u$ and $t$ so that \eqref{eq:condition_alpha_epsilon} is fulfilled (gray region). The blue dots are specific $(u,t)$ for which we do numerical analysis in Section~\ref{section:parameters}.}
    \label{fig:u_t_relation}
\end{figure}

\begin{proof} Let $\rho^\theta$ and $\rho^{\Bar\theta}$ be the density matrices of $\ket{\phi_{\theta}}_{VPAA_cBB_c}$ and $\ket{\phi_{\Bar{\theta}}}_{VPAA_cBB_c}$, respectively.  By hypothesis, 
    \begin{equation}\label{eq:thetatheta}  h(U_{\theta}|PAB_c)_{\rho^\theta}\leq h(U|P)_{\psi}+\varepsilon,\text{ and }      
    h(U_{\Bar\theta}|PAB_c)_{\rho^{\Bar\theta}}\leq h(U|P)_{\psi}+\varepsilon. 
    \end{equation}
By Lemma \ref{lemma:uncertainty}, 
\begin{equation}
h(U_{\theta}|PAB_c)_{\rho^\theta}+h(U_{\Bar\theta}|A_cB)_{\rho^\theta}\geq\log2\pi.
\end{equation}
Then, 
\begin{equation}
    h(U_{\Bar\theta}|A_cB)_{\rho^\theta}\geq \log2\pi-h(U_{\theta}|PAB_c)_{\rho^\theta}\geq \log2\pi-h(U|P)_{\psi}-\varepsilon,
\end{equation}
where in the last inequality we used \eqref{eq:thetatheta}. Therefore, 
\begin{equation}
     h(U_{\Bar\theta}|A_cB)_{\rho^\theta}-h(U_{\Bar\theta}|PAB_c)_{\rho^{\Bar\theta}}\geq \log2\pi-2h(U|P)_{\psi}-2\varepsilon.
\end{equation}
In the regime $\sigma\gg1$, $h(U|P)_{\psi}\rightarrow  \frac{1}{2}\log \frac{\pi e (1+2u)}{2t}$, and thus, 
\begin{equation}
     h(U_{\Bar\theta}|A_cB)_{\rho^\theta}-h(U_{\Bar\theta}|PAB_c)_{\rho^{\Bar\theta}}\geq \log\frac{4t}{e(1+2u)}-2\varepsilon>0,
\end{equation}
where the last inequality comes from the fact that, by hypothesis, $\frac{1}{2}\log\frac{4t}{e(1+2u)} > \varepsilon$. This leads to 
\begin{equation}
     \abs{h(U_{\Bar\theta}|A_cB)_{\rho^\theta}-h(U_{\Bar\theta}|PAB_c)_{\rho^{\Bar\theta}}}\geq \log\frac{4t}{e(1+2u)}-2\varepsilon,
\end{equation}
By hypothesis, 
$\log\frac{4t}{e(1+2u)}-2\varepsilon>\left(\frac{1+\alpha}{1-\alpha}+2\alpha\right)
        \left[2\Tilde\varepsilon \left(\log(E     +1)+\log\frac{e}{\alpha(1-\Tilde\varepsilon)}\right)+6\Tilde{h}\left(\frac{1+\alpha}{1-\alpha}\Tilde{\varepsilon}\right)\right]$. Thus, by the contrapositive of Lemma \ref{lemma:close_states_close_entropy}, 
\begin{equation}
    \frac{1}{2}\lVert\rho^\theta-\rho^{\Bar\theta}\rVert_1 > \Tilde\varepsilon.
\end{equation}

\end{proof}

\begin{definition} \cite{bluhm2022single} Let $q,k,n\in\mathbb N$, $\varepsilon>0$. Then, $g:\{0,1\}^{3k}\rightarrow\{0,1\}$ is an $(\varepsilon,1)$-classical rounding of size $k$ if for all $f:\{0,1\}^{2n}\rightarrow\{0,1\}$, for all states $\ket\psi$ on $2q+2m_0$ qubits, for all $l\in\{1,...,2^{2n}\}$ and for all $(\varepsilon,l)$-perfect $q$-qubit strategies for \QPVcohfm, there are functions $f_A:\{0,1\}^n\rightarrow\{0,1\}^k$, $f_B:\{0,1\}^n\rightarrow\{0,1\}^k$ and $\gamma\in\{0,1\}^k$ such that $g(f_A(x),f_B(y),\gamma)=f(x,y)$ on at least $l$ pairs $(x,y)$. 

\begin{lemma} \cite{book_prob_banach_spaces}
\label{lemma size delta-net} 
Let $|||*|||$ be any norm on $\mathbb{R}^{n_0}$, for $n_0\in\mathbb{N}$. There is a $\delta$-net $S$ of the unit sphere of $(\mathbb{R}^{n_0},|||*|||)$ of cardinality at most $(1+2/\delta)^{n_0}$.
\end{lemma}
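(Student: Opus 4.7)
The plan is to use the standard volumetric/packing argument. First I would take $S$ to be a maximal $\delta$-separated subset of the unit sphere of $(\mathbb{R}^{n_0},|||*|||)$ --- that is, a set of points on the unit sphere with pairwise distance at least $\delta$ that cannot be extended. Such a set exists by a routine Zorn/greedy argument (the finite cardinality will follow from the volume bound below, so there are no compactness issues). Then I would observe that maximality forces $S$ to be a $\delta$-net: if some $z$ on the sphere had $|||z-x|||\geq \delta$ for every $x\in S$, then $S\cup\{z\}$ would still be $\delta$-separated, contradicting maximality.

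Next I would run the classical volume comparison. Consider the open balls $B(x,\delta/2):=\{y\in\mathbb{R}^{n_0}:|||y-x|||<\delta/2\}$ for $x\in S$. By the triangle inequality, if $y\in B(x_1,\delta/2)\cap B(x_2,\delta/2)$ for $x_1\neq x_2$ in $S$, then $|||x_1-x_2|||<\delta$, contradicting $\delta$-separation; hence these balls are pairwise disjoint. Each of them is contained in $B(0,1+\delta/2)$, since for $y\in B(x,\delta/2)$ we have $|||y|||\leq|||x|||+|||y-x|||<1+\delta/2$.

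Now I would apply the Lebesgue measure $\mu$ on $\mathbb{R}^{n_0}$ (which exists independently of the chosen norm). The key property is that all balls $B(x,r)$ are translates and dilates of the fixed unit ball $B(0,1)$, so $\mu(B(x,r))=r^{n_0}\mu(B(0,1))$, and $\mu(B(0,1))$ is finite and positive because the unit ball in any norm on $\mathbb{R}^{n_0}$ is bounded and has nonempty interior. Combining disjointness and containment gives
\begin{equation}
|S|\cdot(\delta/2)^{n_0}\mu(B(0,1))=\sum_{x\in S}\mu(B(x,\delta/2))\leq \mu(B(0,1+\delta/2))=(1+\delta/2)^{n_0}\mu(B(0,1)).
\end{equation}
Dividing through by $(\delta/2)^{n_0}\mu(B(0,1))$ yields $|S|\leq\bigl((1+\delta/2)/(\delta/2)\bigr)^{n_0}=(1+2/\delta)^{n_0}$, as required.

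There is no real obstacle here: the only subtlety is to justify that the volume argument works for an arbitrary norm, which I would handle by noting that all norms on $\mathbb{R}^{n_0}$ are equivalent to the Euclidean one, so the unit ball in $|||*|||$ is a bounded Borel set with nonempty interior and hence has finite positive Lebesgue measure, and that Lebesgue measure transforms homogeneously under the scalar dilations $y\mapsto ry$ regardless of the norm. Everything else is triangle inequality and bookkeeping.
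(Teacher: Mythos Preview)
Your argument is correct and is exactly the classical packing/volume comparison proof. Note that the paper does not actually prove this lemma; it merely cites it from the reference \cite{book_prob_banach_spaces}, where the same volumetric argument you give (maximal $\delta$-separated set, disjoint $\delta/2$-balls inside the $(1+\delta/2)$-ball, Lebesgue measure comparison) is the standard one.
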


\end{definition}
\begin{lemma}
\label{lemma unit vectors trace distance norm}
\cite{bluhm2022single} Let $\ket{x},\ket{y}\in\mathbb{C}^d$, for $d\in\mathbb N$, be two unit vectors. Then, $\mathcal{P}(\ket{x},\ket{y})\leq \| \ket{x}-\ket{y}\|_2$.
\end{lemma}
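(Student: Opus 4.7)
The plan is to reduce the claimed inequality to the non-negativity of a sum of squares by expanding both quantities in terms of the inner product $\braket{x}{y}$. Since the states are pure, the earlier definitions give $F(\ket{x},\ket{y}) = |\braket{x}{y}|$, and hence
\begin{equation}
\mathcal{P}(\ket{x},\ket{y})^2 = 1 - |\braket{x}{y}|^2.
\end{equation}
On the other hand, using $\braket{x}{x} = \braket{y}{y} = 1$,
\begin{equation}
\|\ket{x}-\ket{y}\|_2^2 = 2 - 2\operatorname{Re}\braket{x}{y}.
\end{equation}
So the target inequality is equivalent, after squaring, to $1 - |\braket{x}{y}|^2 \leq 2 - 2\operatorname{Re}\braket{x}{y}$.

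The key step is to handle the complex inner product. Writing $\braket{x}{y} = a + ib$ with $a, b \in \mathbb{R}$, the right-minus-left side becomes
\begin{equation}
(2 - 2a) - (1 - a^2 - b^2) = (1-a)^2 + b^2 \geq 0,
\end{equation}
which closes the argument. An equivalent, perhaps cleaner, way to present this is to note that $|\braket{x}{y}|^2 \geq (\operatorname{Re}\braket{x}{y})^2 \geq 2\operatorname{Re}\braket{x}{y} - 1$ (the second step being the AM-GM style estimate $t^2 + 1 \geq 2t$), so $1 - |\braket{x}{y}|^2 \leq 2 - 2\operatorname{Re}\braket{x}{y} = \|\ket{x}-\ket{y}\|_2^2$, and taking square roots yields the claim.

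There is essentially no obstacle here; the statement is a purely algebraic fact and the only subtlety worth flagging is that one must not discard the imaginary part of $\braket{x}{y}$ prematurely. Both quantities on the two sides of the inequality are non-negative, so the final square-root step is harmless and the proof is complete.
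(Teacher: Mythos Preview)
Your proof is correct. The paper does not actually supply its own proof of this lemma; it is stated with a citation to \cite{bluhm2022single} and used as a black box in the proof of Proposition~\ref{prop:size_k}. Your argument---expanding both sides in terms of $\braket{x}{y}$ and reducing to $(1-\operatorname{Re}\braket{x}{y})^2 + (\operatorname{Im}\braket{x}{y})^2 \geq 0$---is the standard elementary verification, and there is nothing to compare against in the present paper.
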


\begin{prop}\label{prop:size_k}
    Let $\varepsilon$, $\Tilde{\varepsilon}$ be such that if $\ket{\varphi_{\theta}}\in\mathcal{S}^{\varepsilon}_{\theta}$ and $\ket{\varphi_{\Bar\theta}}\in\mathcal{S}^{\varepsilon}_{\Bar\theta}$ implies $\mathcal{P}(\ket{\varphi_{\theta}},\ket{\varphi_{\Bar\theta}})>\Tilde\varepsilon$, then there is an $(\varepsilon,q)$-classical rounding of size $k=2^{2q+2m_0}\log\big(1+\frac{4}{\sqrt[3]{4(2+\Tilde\varepsilon)}-2}\big)$.
\end{prop}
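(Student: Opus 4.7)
The plan is to build a classical rounding by approximating the three ingredients of any attacker's strategy—the shared initial state $\ket{\chi}$, Alice's unitary family $\{U^x\}_x$, and Bob's unitary family $\{V^y\}_y$—independently with $\delta$-nets, and to use the assumed $\tilde\varepsilon$-separation between $\mathcal{S}^\varepsilon_0$ and $\mathcal{S}^\varepsilon_{\pi/2}$ to show that the approximated state retains enough information to recover $f(x,y)$. The key structural observation is that, even though $\ket{\phi_{xy}} = U^x\otimes V^y\ket{\chi}$ couples $x$ and $y$, its three building blocks can be indexed separately and therefore packaged into the local strings $f_A(x)$, $f_B(y)$ and the shared string $\gamma$ that define a classical rounding.

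Concretely, I would proceed in four steps. First, fix an $(\varepsilon,l)$-perfect $q$-qubit strategy for \QPVcohfm\ and, for each of the $l$ pairs $(x,y)$, consider the post-intercept state $\ket{\phi_{xy}}$, which by hypothesis lies in $\mathcal{S}^\varepsilon_{f(x,y)}$. Second, apply Lemma~\ref{lemma size delta-net} to the Euclidean unit sphere of $\mathbb{C}^{2^{2q+2m_0}}$ and, after a suitable embedding, to the relevant unitary groups, obtaining $\delta$-nets of cardinality at most $(1+2/\delta)^{2^{2q+2m_0}}$; this gives a labelling alphabet of size $k=2^{2q+2m_0}\log(1+2/\delta)$. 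Third, let $\gamma$ be the index of the net-nearest $\ket{\tilde\chi}$, and let $f_A(x)$, $f_B(y)$ be the indices of the net-nearest $\tilde U^x$, $\tilde V^y$. Fourth, define $g(f_A(x),f_B(y),\gamma)$ to reconstruct $\ket{\tilde\phi_{xy}} = \tilde U^x\tilde V^y\ket{\tilde\chi}$ and output the label of the class $\mathcal{S}^\varepsilon_\theta$ to which $\ket{\tilde\phi_{xy}}$ is unambiguously closer in purified distance (using a pre-fixed tie-breaking rule, which is irrelevant in the regime we work in).

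Correctness reduces to two standard bounds: by the triangle inequality together with unitary invariance, $\|\ket{\phi_{xy}}-\ket{\tilde\phi_{xy}}\|_2 \leq 3\delta$, and by Lemma~\ref{lemma unit vectors trace distance norm}, $\mathcal{P}(\ket{\phi_{xy}},\ket{\tilde\phi_{xy}})\leq 3\delta$. Since $\ket{\phi_{xy}}\in\mathcal{S}^\varepsilon_{f(x,y)}$ is $\tilde\varepsilon$-separated from every element of $\mathcal{S}^\varepsilon_{\overline{f(x,y)}}$, for $\delta$ small enough the approximation cannot cross into the wrong class, so $g$ outputs $f(x,y)$ on the required $l$ pairs. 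Imposing the sharp form of this inequality and solving for $\delta$ yields $\delta=\tfrac12(\sqrt[3]{4(2+\tilde\varepsilon)}-2)$, and substituting into $k=2^{2q+2m_0}\log(1+2/\delta)$ gives the announced size.

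The main obstacle I expect is this last quantitative calibration: the cube-root expression is not visible from the naive triangle inequality $\mathcal{P}\leq 3\delta$ alone, and must come from a more careful tracking of how the three independent approximations compose. Possible routes are to optimise the three net precisions jointly under a combined-error constraint, or to sharpen the bridge between $\|\cdot\|_2$-approximation and the purified-distance metric in which the separation hypothesis is stated. Either way, it is this step that pins down the exact $\delta$, and therefore the exact $k$, claimed by the proposition; everything else is a straightforward packaging of the $\delta$-net construction into the data $(f_A,f_B,\gamma,g)$ required by the definition of a classical rounding.
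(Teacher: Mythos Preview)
Your proposal is correct and follows essentially the same route as the paper: $\delta$-nets for the initial state and the two local unitary families, define $g$ by nearest-set classification of the reconstructed state, and verify that the approximation error is below $\tilde\varepsilon/2$ so that the classification agrees with $f(x,y)$ on all $l$ good pairs.

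The obstacle you flag---where the cube root comes from---is resolved exactly along the line you anticipate. The paper does not use the telescoped bound $3\delta$; instead it writes
\[
U^x\otimes V^y\ket{\chi}=(\tilde U^x+\Delta_A)\otimes(\tilde V^y+\Delta_B)\,(\ket{\tilde\chi}+\Delta_\chi)
\]
and expands, bounding the resulting cross terms to obtain
\[
\big\|\,U^x\otimes V^y\ket{\chi}-\tilde U^x\otimes\tilde V^y\ket{\tilde\chi}\,\big\|_2\;\le\;3\delta+3\delta^2+\delta^3=(1+\delta)^3-1.
\]
Requiring $(1+\delta)^3-1<\tilde\varepsilon/2$ gives $\delta<\sqrt[3]{(2+\tilde\varepsilon)/2}-1$, which after a little algebra is exactly your $\delta=\tfrac12\big(\sqrt[3]{4(2+\tilde\varepsilon)}-2\big)$; substituting into $k=2^{2q+2m_0}\log(1+2/\delta)$ then yields the stated size. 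Your sharper first-order bound $3\delta$ is in fact also valid (the net elements are genuine unitaries), but it would lead to a different constant in $k$ than the one in the proposition.
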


\begin{proof}
We follow the same techniques as in the proof of Lemma~3.12 in \cite{bluhm2022single}. Let $\delta<\sqrt[3]{\frac{2+\Tilde{\varepsilon}}{2}}-1$, and consider $\delta$-nets $\mathcal{N}_S$, $\mathcal{N}_A$ and $\mathcal{N}_B$, where the first is for the set of pure states on $2q+2m_0$ qubits in Euclidean norm and the other nets are for the set of unitaries in dimension $2^q$ in operator norm. They are such that $\abs{\mathcal{N}_S}$, $\abs{\mathcal{N}_A}$, $\abs{\mathcal{N}_B}\leq 2^k$, with $k$ to be set later. Let $\ket{\varphi}\in\mathcal{N}_S$, $U_A\in\mathcal{N}_A$, and $U_B\in\mathcal{N}_B$ be the elements with indices $x'\in\{0,1\}^k$, $y'\in\{0,1\}^k$ and $\gamma\in\{0,1\}^k$, respectively. We define $g$ as $g(x,y,\gamma)=0$ if $U\otimes V\ket{\varphi}$ is closer to $\mathcal{S}^{\varepsilon}_{\theta}$ than to $\mathcal{S}^{\varepsilon}_{\Bar\theta}$ in purified distance and $g(x,y,\gamma)=1$ if $U\otimes V\ket{\varphi}$ is closer to $\mathcal{S}^{\varepsilon}_{\Bar\theta}$ than to $\mathcal{S}^{\varepsilon}_{\theta}$ in purified distance. If neither is the case, we make the arbitrary choice $g(x,y,\gamma)=1$. By the assumption on $\varepsilon$,  $\mathcal{S}^\varepsilon_{\theta}\cap \mathcal{S}^\varepsilon_{\theta}=\emptyset$, and thus $g$ is well-defined.

We are going to show that $g$ is an $(\varepsilon,q)$-classical rounding. Consider an arbitrary $f:\{0,1\}^{2n}\rightarrow\{0,1\}$ and an arbitrary state $\ket\chi$ on $2q+2m_0$ qubits. Let $\ket{\chi}$, $\{U_A^x,U_B^y\}_{xy}$ be from a $q$-qubit strategy for \QPVcohfm, and choose $\gamma$, $f_A(x)$ and $f_B(y)$ to be the closest elements to $\ket{\chi}$, $U_A^x$ and $U_B^y$, respectively, in their corresponding $\delta$-nets in the Euclidean and operator norm, respectively, (if not unique, make an arbitrary choice) and let $\ket{\varphi},U_A,U_B$ be their corresponding elements. Assume $U_A^x\otimes U_B^y\ket{\chi}\in\mathcal{S}_{\theta}^{\varepsilon}$. Then,
\begin{equation}\label{eq distance in the net}
\begin{split}
    \mathcal{P}(&U_A^x\otimes U_B^y\ket{\chi},U_A\otimes U_B\ket{\varphi})\leq \| U_A^x\otimes U_B^y\ket{\chi}-U_A\otimes U_B\ket{\varphi}\|_2\\&\leq \|(U_A+U_A^x-U_A)\otimes(U_B+U_B^y-U_B)(\ket{\varphi}+\ket{\chi}-\ket{\varphi})-U_A\otimes U_B\ket{\varphi}\|_2
    \\&\leq3\delta+3\delta^2 +\delta^3< \frac{\Tilde\varepsilon}{2},
\end{split}
\end{equation}

where in the first inequality, we have used Lemma \ref{lemma unit vectors trace distance norm}, in the second, we have used the triangle inequality and the inequality $\|X\otimes Y \ket{x}\|_2\leq \|X\|_{\infty}\|Y\|_{\infty}\| \ket{x}\|_2$, together with $\|U_A^x-U_A\|_{\infty}, {\|U_B^y-U_B\|_{\infty}},\|\ket{\chi}-\ket{\varphi}\|\leq \delta$.
Thus, $U_A\otimes U_B\ket{\varphi}$ is closer to $\mathcal{S}_\theta^{\varepsilon}$ than to $\mathcal{S}_{\Bar\theta}^{\varepsilon}$.\\
Consider an $(\varepsilon, l)$-perfect strategy for \QPVcohfm~ and let $(x,y)$ be such that  $ h(U_{\theta}|PAB_c)_{\varphi},h(U_{\theta}| A_cB)_{\varphi}\leq h(U|P)_{\psi}+\varepsilon$ for $f(x,y)=0$. In particular, we have that $U_A^x\otimes U_B^y\ket{\chi}\in\mathcal{S}_i^{\varepsilon}$, and because of \eqref{eq distance in the net}, $f(x,y)=g(f_A(x),f_B(y),\gamma)$. Since there are at least $l$ pairs $(x,y)$ fulfilling it, $f(x,y)= g(f_A(x),f_B(y),\gamma)$ holds on at least $l$ pairs $(x,y)$ and therefore $g$ is an $(\varepsilon,q)$-classical rounding. The size of $k$ follows from Lemma \ref{lemma size delta-net}.
\end{proof}

\begin{lemma}
\label{lem:stable_E4}
Let $\varepsilon \in [0,1]$, $E,t,u>0$ be such that there exist $\Tilde{\varepsilon}>0$ and $\alpha$ such that \eqref{eq:condition_alpha_epsilon} holds. Let  $k$, $q \in \mathbb N$, $n =\Omega(m_0)$. Moreover, fix an $(\varepsilon, q)$-classical rounding $g$ of size $k$ with  
   $k=2^{2q+2m_0}\log\big(1+\frac{4}{\sqrt[3]{4(2+\Tilde\varepsilon)}-2}\big)$.
Let 
   $q = O(n-m_0)$. Then, with probability   $1-O(\lambda^{2^{m_0}})$ 
   
   the following holds:

A  uniformly random $f: \{0,1\}^{2n} \to \{0,1\}$ fulfills the following with probability at least 
${1 - O(2^{-2^{n}})}:$

 For any $f_A:\{0,1\}^n \to \{0,1\}^{k}$, $f_B:\{0,1\}^n \to \{0,1\}^{k}$, $\gamma \in \{0,1\}^{k}$, the equality $g(f_A(x), f_B(y), \gamma) = f(x,y)$ holds on less than $3/4$ of all pairs $(x,y)$.
\end{lemma}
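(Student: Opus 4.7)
The plan is to establish the inner probabilistic claim by a standard concentration plus union bound argument in the spirit of Lemma~E.4 of \cite{bluhm2022single}. Fix any triple $(f_A,f_B,\gamma)$; then $h(x,y):=g(f_A(x),f_B(y),\gamma)$ is a deterministic Boolean function, so for a uniformly random $f:\{0,1\}^{2n}\to\{0,1\}$ each indicator $\mathbbm{1}[f(x,y)=h(x,y)]$ is an i.i.d.\ $\mathrm{Bernoulli}(1/2)$ variable. Hoeffding's inequality then gives
\begin{equation*}
\Pr_f\!\left[\bigl|\{(x,y)\colon f(x,y)=h(x,y)\}\bigr|\geq \tfrac34\,2^{2n}\right]\leq \exp(-2^{2n}/8).
\end{equation*}

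Next I would count the triples: there are $2^{k\cdot 2^n}$ choices for $f_A$ (and the same for $f_B$), together with $2^k$ choices for $\gamma$, giving at most $2^{k(2\cdot 2^n+1)}$ triples. A union bound therefore bounds the probability that some triple achieves agreement $\geq 3/4$ by $2^{k(2\cdot 2^n+1)}\exp(-2^{2n}/8)$. Since the logarithmic prefactor in the stated
\begin{equation*}
k=2^{2q+2m_0}\log\!\Bigl(1+\tfrac{4}{\sqrt[3]{4(2+\tilde\varepsilon)}-2}\Bigr)
\end{equation*}
is $O(1)$ (the bracket is a positive constant because $\tilde\varepsilon$ is fixed through \eqref{eq:condition_alpha_epsilon}), one has $k=\Theta(2^{2q+2m_0})$. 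With $n=\Omega(m_0)$ and $q=O(n-m_0)$ chosen with a sufficiently small implicit constant so that $2q+2m_0+n+1\leq 2n-\Omega(n)$, the product above is $O(2^{-2^n})$, which establishes the inner claim.

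The outer factor $1-O(\lambda^{2^{m_0}})$ accounts for the replacement of the physical two-mode squeezed state $\ket{\Psi}$ by its photon-number-cutoff version $\ket{\Psi_{m_0}}$, on which the entire imaginary-world analysis --- and hence the existence of the rounding $g$ from Proposition~\ref{prop:size_k} --- is based. Using $\mathcal{P}(\ket{\Psi},\ket{\Psi_{m_0}})=\lambda^{2^{m_0}}$, any distinguishing event for the physical protocol differs in probability from its counterpart on the cutoff protocol by at most $O(\lambda^{2^{m_0}})$, which yields the stated outer probability.

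The main obstacle I expect is pinning down the parameter regime: one must verify that the implicit constant in $q=O(n-m_0)$ can be chosen to make $2q+2m_0\leq (1-\Omega(1))n$ while still leaving a non-trivial range of $q$, and that the prefactor $\log\bigl(1+4/(\sqrt[3]{4(2+\tilde\varepsilon)}-2)\bigr)$ is bounded uniformly in the admissible $\tilde\varepsilon$ dictated by \eqref{eq:condition_alpha_epsilon} for the given $(t,u,E)$. The Hoeffding estimate and the counting step themselves are routine.
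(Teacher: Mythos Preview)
Your proposal is correct and follows essentially the same route as the paper: a union bound over the at most $2^{k(2\cdot 2^n+1)}$ triples $(f_A,f_B,\gamma)$ combined with a tail bound for the fraction of agreements between a random $f$ and a fixed $h$. The only cosmetic difference is that the paper uses the binary-entropy bound on the binomial tail, writing the single-triple probability as $2^{2^{2n}h(1/4)}2^{-2^{2n}}$, whereas you use Hoeffding to get $\exp(-2^{2n}/8)$; both give a decay of the form $2^{-\Theta(2^{2n})}$ and lead to the same conclusion under the stated parameter choice. Your treatment of the outer factor $1-O(\lambda^{2^{m_0}})$ via the purified-distance approximation $\mathcal{P}(\ket{\Psi},\ket{\Psi_{m_0}})=\lambda^{2^{m_0}}$ is exactly what the paper intends (though the paper does not spell it out inside this proof).
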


\begin{proof}
     We want to estimate the probability that for a randomly chosen $f$, we can find $f_A$ and $f_B$ such that the corresponding function $g$ fulfils $\mathbb{P}_{x,y}[f(x,y)= g(f_A(x),f_B(y),\gamma)]\geq3/4$. In a similar manner as in \cite{bluhm2022single}, we have that
\begin{equation}\label{eq:prob<=}
     \mathbb{P}[f:\exists f_A,f_B,\gamma \textrm{ s.t. } \mathbb{P}_{x,y}[f(x,y) = g(f_A(x),f_B(y),\gamma)]]\leq
2^{ (2^{n+1}+1)k} 2^{2^{2n}h(1/4)} 2^{-2^{2n}},
\end{equation}
where $h$ denotes the binary entropy function. If $q=O(n-m_0)$ and $k=2^{2q+2m_0}\log\big(1+\frac{4}{\sqrt[3]{4(2+\Tilde\varepsilon)}-2}\big)$, the above expression is strictly upper bounded by $O(2^{-2^n})$. 
\end{proof}

In order to have explicit expressions instead of $n =\Theta(m_0)$ and $q = O(n-m_0)$, we have to fix the value of $\Tilde\varepsilon$. To obtain better bounds, we are interested in picking $\Tilde{\varepsilon}$ as large as possible. Given parameters $E,t,u$ and $\varepsilon$, known by the verifiers, we will be interested in picking values of $\alpha$ such that \eqref{eq:condition_alpha_epsilon} holds for $\Tilde\varepsilon$  as large as possible. This needs to be done numerically, since \eqref{eq:condition_alpha_epsilon} leads to a transcendental equation, see Section~\ref{section:parameters} for this analysis. This applies as well for the below theorem, which in short states that if the number of qubits the attackers pre-share at the beginning of the protocol, with high probability at least one of them will have a finite gap to the uncertainty of the honest prover regarding the value of the random variable $U$. That is, at least one attacker will have strictly larger uncertainty than the prover.

\begin{theorem} \label{thm:main} Let $\varepsilon \in [0,1]$, $E,t,u>0$ (under the control of the verifiers) be such that there exists $\Tilde{\varepsilon}>0$ and $\alpha$ such that \eqref{eq:condition_alpha_epsilon} holds. Let $n =\Theta(m_0)$. Let the number of qubits that Alice and Bob each control at the beginning of the protocol be  
\begin{equation}
    q=O(n-m_0).
\end{equation}
Then, with probability $1-O(\lambda^{2^{m_0}})$ 
the following holds.
   A random function $f$  fulfills the following with probability at least ${1 - O(2^{-2^{n}})}:$ the uncertainties for Alice and Bob when attacking the protocol \QPVcohf~are such that
    \begin{equation}\label{equ:att_h_lb}
    \max\{h(U_{\theta}|PAB_c)_{\phi},h(U_{\theta}| A_cB)_{\phi}\}\geq h(U|P)_{\psi}+\frac{\varepsilon}{4},
\end{equation}  
for every state $\ket{\phi}\in\mathbb C^{2q+2m_0}$, for $\theta\in\{0,\frac{\pi}{2}\}$.
\end{theorem}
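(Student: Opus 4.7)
The plan is to argue by contradiction, combining the separation of the sets $\mathcal{S}^{\varepsilon}_{\theta}$ established in the preceding proposition, the classical-rounding construction of Proposition \ref{prop:size_k}, and the counting bound of Lemma \ref{lem:stable_E4}, after first reducing to the cutoff version \QPVcohfm\ of the protocol.

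First, I will pass from \QPVcohf\ to \QPVcohfm. Since $\mathcal{P}(\ket\Psi,\ket{\Psi_{m_0}}) = \lambda^{2^{m_0}}$, any post-attack state in the true protocol differs from the corresponding state in the cutoff protocol by $O(\lambda^{2^{m_0}})$ in purified distance and hence in trace distance. Lemma \ref{lemma:close_states_close_entropy}, applied with the energy bound \eqref{energy of V} (which is bounded above by $\sigma^2$), implies that the conditional entropies of the two states differ by an amount that is $o(1)$ as $m_0 \to \infty$, in particular strictly less than $\varepsilon/4$ once $m_0$ is large enough. This is what produces the $1-O(\lambda^{2^{m_0}})$ probability in the statement and lets me work inside the $2^{m_0}$-dimensional cutoff system, matching the Hilbert-space dimension used in Proposition \ref{prop:size_k}.

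Next, I suppose for contradiction that there exist a $q$-qubit strategy $\{\ket\chi,U^x,V^y,A^{xy},B^{xy}\}$ for \QPVcohfm\ and at least $l \geq \tfrac{3}{4}\cdot 2^{2n}$ pairs $(x,y)$ whose post-attack state $\ket{\phi^{xy}}$ violates \eqref{equ:att_h_lb} at $\theta = f(x,y)$, namely both $h(U_\theta|PAB_c)_{\phi^{xy}}$ and $h(U_\theta|A_cB)_{\phi^{xy}}$ are smaller than $h(U|P)_\psi + \varepsilon/4$. Then the strategy is $(\varepsilon/4, l)$-perfect and each $\ket{\phi^{xy}}$ lies in $\mathcal{S}^{\varepsilon/4}_{f(x,y)}$. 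Replacing $\varepsilon$ by $\varepsilon/4$ in \eqref{eq:condition_alpha_epsilon} only weakens the requirement, so the same $(\alpha,\tilde\varepsilon)$ witness a $\tilde\varepsilon$-separation of $\mathcal{S}^{\varepsilon/4}_0$ and $\mathcal{S}^{\varepsilon/4}_{\pi/2}$ in trace distance, and therefore in purified distance via Lemma \ref{lemma unit vectors trace distance norm}. Applying Proposition \ref{prop:size_k} then yields an $(\varepsilon/4, q)$-classical rounding $g$ of size $k = 2^{2q+2m_0}\log\!\bigl(1+\tfrac{4}{\sqrt[3]{4(2+\tilde\varepsilon)}-2}\bigr)$, together with functions $f_A,f_B$ and a string $\gamma$ such that $g(f_A(x),f_B(y),\gamma) = f(x,y)$ holds on all $l \geq \tfrac{3}{4}\cdot 2^{2n}$ bad pairs. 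But Lemma \ref{lem:stable_E4}, applied with this same $k$ and the scaling $q = O(n-m_0)$, $n = \Theta(m_0)$, says that a uniformly random $f$ admits no such $(f_A,f_B,\gamma)$ except with probability $O(2^{-2^n})$. This contradicts the classical rounding just constructed and proves the theorem.

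The main obstacle I anticipate is parameter bookkeeping: verifying that \eqref{eq:condition_alpha_epsilon} remains valid with $\varepsilon/4$ in place of $\varepsilon$ for the same $(\alpha,\tilde\varepsilon,E)$, and ensuring that in the union bound leading to Lemma \ref{lem:stable_E4} the exponent $(2^{n+1}+1)k$ is dominated by $2^{2n}(1-h(1/4))$, which forces $2q + 2m_0$ to be a sufficiently small constant fraction of $n$. The factor $1/4$ in the entropy gap is precisely the slack that absorbs the $O(\lambda^{2^{m_0}})$ perturbation coming from the cutoff approximation, so getting these quantitative relations to line up cleanly is the delicate step.
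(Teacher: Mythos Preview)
Your approach is essentially the paper's: reduce to the cutoff protocol \QPVcohfm, observe that a strategy violating \eqref{equ:att_h_lb} on at least $\tfrac{3}{4}\cdot 2^{2n}$ pairs is an $(\varepsilon,\tfrac{3}{4}2^{2n})$-perfect strategy, and then invoke Lemma~\ref{lem:stable_E4} (which packages Proposition~\ref{prop:size_k} and the counting bound) to rule this out for a random $f$ except with probability $O(2^{-2^n})$. The paper's proof is one line because Lemma~\ref{lem:stable_E4} already does the work you unpack explicitly.

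Two simplifications remove the ``parameter bookkeeping'' obstacle you flag. First, you need not replace $\varepsilon$ by $\varepsilon/4$ in \eqref{eq:condition_alpha_epsilon} or in Proposition~\ref{prop:size_k}: if both conditional entropies are below $h(U|P)_\psi+\varepsilon/4$ then they are certainly below $h(U|P)_\psi+\varepsilon$, so the strategy is already $(\varepsilon,l)$-perfect and $\ket{\phi^{xy}}\in\mathcal{S}^{\varepsilon}_{f(x,y)}$; the paper runs the whole argument with $\varepsilon$ and only records the weaker $\varepsilon/4$ gap in the conclusion. Second, the reduction to the cutoff does not require Lemma~\ref{lemma:close_states_close_entropy}. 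The $1-O(\lambda^{2^{m_0}})$ probability is purely operational: since $\mathcal{P}(\ket\Psi,\ket{\Psi_{m_0}})=\lambda^{2^{m_0}}$, the two protocols are indistinguishable up to that probability, and the conclusion of the theorem is already phrased for states $\ket\phi\in\mathbb{C}^{2q+2m_0}$ living in the cutoff space, so there is nothing to transfer back via an entropy continuity bound.
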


\begin{proof}
   By Lemma \ref{lem:stable_E4}, with probability at least $1 - O(2^{-2^{n}})$, the function $f$ is such that there are no $(\varepsilon,\frac{3}{4}2^{2n})$-perfect $q$-qubit strategies for \QPVcohfm. That means that for every strategy, on a fraction at least $\frac{1}{4}$ of $(x,y)$, either $h(U_{\theta}|PAB_c)_{\phi}\geq h(U|P)_{\psi}+\frac{\varepsilon}{4}$ or $h(U_{\theta}| A_cB)_{\phi}\}\geq h(U|P)_{\psi}+\frac{\varepsilon}{4}$. 
\end{proof}

\subsection{Bounding attack success probability after repeated i.i.d.\ rounds}
For the following, remember that $\sigma \gg 1$, equivalently $\lambda \rightarrow1$. To estimate the number of (independent) rounds $N$ we have to run for the attack success probability to become vanishingly small, we cannot assume a specific attack distribution, and we have to assume the attackers have access to an ideal channel. By eq.~\eqref{equ:att_h_lb} we know that
\begin{align}
    h(U_{\theta}|E)_{\phi} &\coloneqq \max\{h(U_{\theta}|PAB_c)_{\phi},h(U_{\theta}| A_cB)_{\phi}\}\geq h(U|P)_{\psi}+\frac{\varepsilon}{4} \\
    &= \frac{1}{2} \log \left( \pi e \frac{1/2+u}{t} \right) + \frac{\varepsilon}{4}.
\end{align}
Now re-substituting $R = \sqrt{2}\lambda U$ yields
\begin{align}
    h(R|E)_{\phi} &\geq h(R|P)_{\psi}+\frac{\varepsilon}{4} \\
    &= \frac{1}{2} \log \left( 2\pi e \frac{1/2+u}{t} \right) + \frac{\varepsilon}{4} \\
    &\geq \frac{1}{2} \log \left( \pi e \right) + \frac{\varepsilon}{4},
\end{align}
where the last equation follows from eq.~\eqref{equ:h_honest} and the lower bound is smallest for the ideal channel with $t=1$ and $u=0$, which we assume attackers can use. Via the continuous variable version of Fano's inequality, Theorem~\ref{theorem:fano}, we can straightforwardly convert this into a lower bound for the estimation error of the attackers. We obtain
\begin{align}
    \mathbb{E}\left[ \left( R - r' \right)^2 \right] \geq \frac{1}{2\pi e} e^{2 h_\mathrm{nats}(R|E)_{\phi}} = \frac{1}{2} e^{\varepsilon/2}.
\end{align}
Thus $\mathbb{E}(\sqrt{t}R-r')^2 \geq \frac{1}{2} e^{\varepsilon/2}$ for any transmission $t$. The probability that the attackers' score falls below the threshold $\gamma$ is at most the probability that the score differs from $\mathbb{E}(\sqrt{t}R-r')^2/(1/2+u)$ by more than the difference $\Delta \coloneqq \frac{1/2}{1/2+u}e^{\varepsilon/2} - \gamma$.  Let $(r'^{\mathrm{att}}_i)_{i=1}^N$ be the sample of the attackers after $N$ i.i.d. rounds. Then we can then use the Chebyshev inequality for the random variable of the score to get 
\begin{align}
    \mathbb{P}\left[ \left| \frac{1}{N} \sum_{i=1}^N \frac{(\sqrt{t}r_i - r'^{\mathrm{att}}_i)^2}{1/2+u} - \frac{\mathbb{E} (\sqrt{t}R-r')^2}{1/2+u} \right| \geq \Delta \right] \leq \frac{\Tilde{\sigma}^2}{N\Delta^2} = O \left( \frac{1}{N \Delta^2} \right), 
\end{align}
where $\Tilde{\sigma}^2 = \mathbb{V} \left[ \frac{\left(\sqrt{t}R - r' \right)^2}{1/2+u} \right]$ is the variance. We set the tolerance for the attack success probability to $\varepsilon_\mathrm{att} = \varepsilon_\mathrm{hon}$ for simplicity. If we then set $N \Delta^2 = \Omega \left( \frac{1}{\varepsilon_{\mathrm{hon}}} \right)$, we get
\begin{align}
    \mathbb{P}\left[ \frac{1}{N} \sum_{i=1}^N \frac{(\sqrt{t}R_i - r_i')^2}{1/2+u} \leq \gamma \right] \leq O(\varepsilon_{\mathrm{hon}}).
\end{align}
The required number of rounds $N$ can be obtained by first setting the tolerated $\varepsilon_\mathrm{hon}$ and then solving $N \Delta^2 = \Omega \left( \frac{1}{\varepsilon_{\mathrm{hon}}} \right)$ for $N$. This means we accept the honest prover with probability at least $1-\varepsilon_{\mathrm{hon}}$, while accepting attackers with probability at most $\varepsilon_\mathrm{hon}$ after $N$ i.i.d. rounds.

\section{Concrete linear bounds for given experimental parameters}\label{section:parameters}

In the above section, we proved that, if $V_0$ prepares a two-mode squeezed state with squeezing parameter $\zeta$ and $\lambda=\tanh\zeta$, with probability $1-O(\lambda^{2^{m_0}})$ attackers who pre-share $q=O(n-m_0)$ qubits will not be able to mimic arbitrarily close an honest prover. For that, we need that  $\varepsilon,$ $E,t$ and $u$ are such that exists $\Tilde{\varepsilon}>0$ and $\alpha$ for which \eqref{eq:condition_alpha_epsilon} holds. In order to have parameters fulfilling \eqref{eq:condition_alpha_epsilon} for $\varepsilon>0$ we need that the attenuation parameter $t$ and the excess noise power $u$ of the quantum channel connecting $V_0$ and $P$ fulfill the relation \eqref{eq:relation_t_u}. In the section we analyze perfect and imperfect channels. 

\subsection{Perfect channel}
We start by considering a perfect channel connecting $V_0$ and $P$, given by $t=1$ and $u=0$. We fix $\varepsilon=0.1$ and assume the protocol is played enough rounds to statistically distinguish the honest party with an uncertainty
\begin{equation}
    h(U|P)_{\psi}\rightarrow\frac{1}{2}\log \frac{\pi e }{2}\simeq 1.0471,
\end{equation}
from the uncertainty of at least one of the attackers being lower bounded by
\begin{equation}
    \frac{1}{2}\log \frac{\pi e }{2}+\frac{\varepsilon}{4}\simeq 1.0721.
\end{equation}
Fix an energy bound $E=10^3$ in units such that $\hbar\omega=1$. Then, the largest $\Tilde{\varepsilon}$ that fulfills \eqref{eq:condition_alpha_epsilon} is 
\begin{equation}
    \Tilde\varepsilon\simeq0.0037,
\end{equation} for $\alpha\simeq0.036$, see Fig.~\ref{fig:alpha_epsilon} for a representation of the inequality \eqref{eq:condition_alpha_epsilon} where the value of $\varepsilon$ is fixed and represented as the black plane and the gray surface represents the right-hand side of the inequality.

\begin{figure}
    \centering
    \includegraphics[width=120mm]{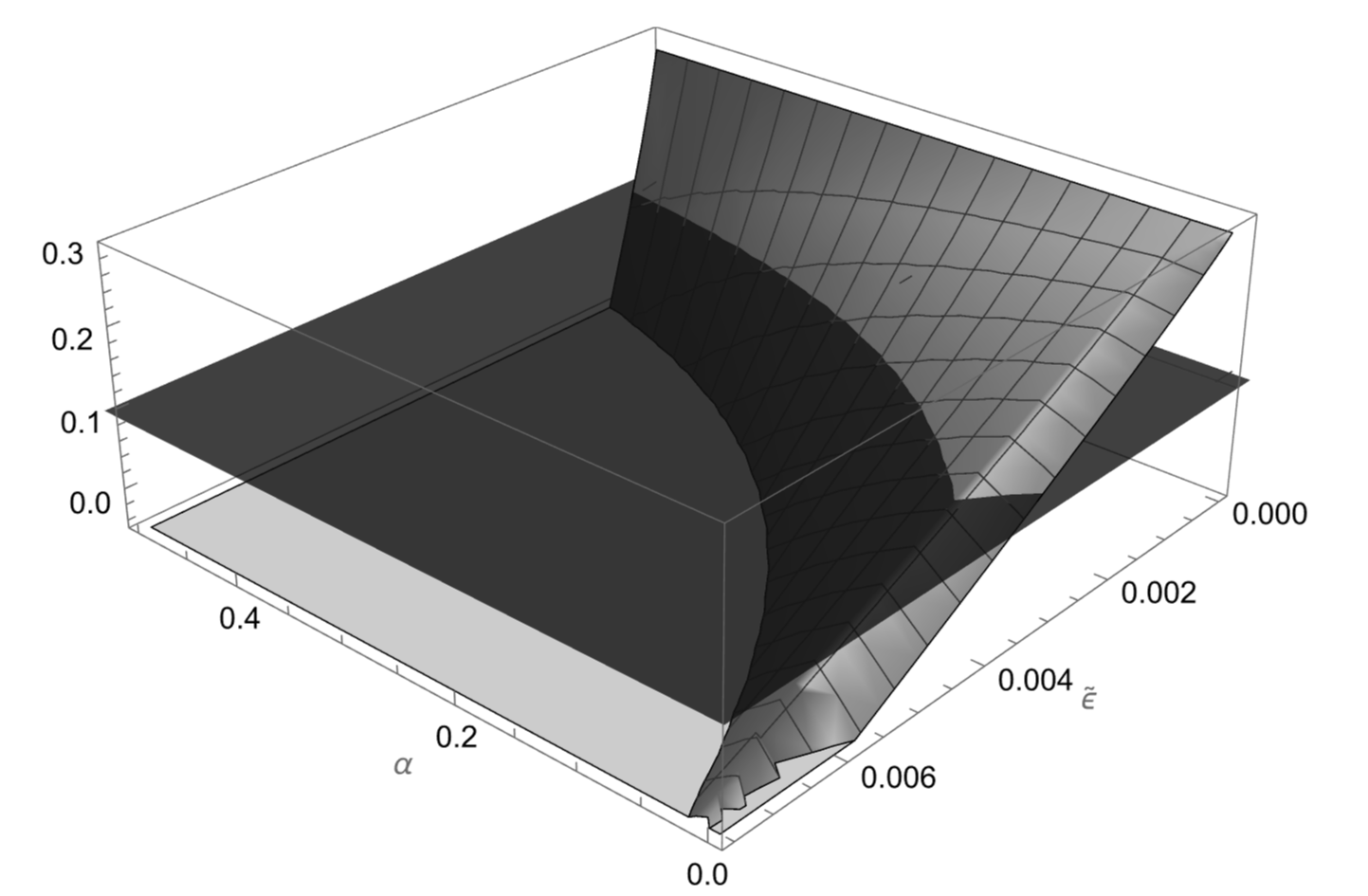}
    \caption{Representation of the left (transparent black) and right (gray surface) hand sides of the inequality \eqref{eq:condition_alpha_epsilon} for $\varepsilon=0.1, E=10^3, t=1$ and $u=0$. The region where the gray surface is above the black plain gives the $(\alpha,\Tilde{\varepsilon})$ such that the inequality \eqref{eq:condition_alpha_epsilon} holds.}
    \label{fig:alpha_epsilon}
\end{figure}

Notice that the energy term in \eqref{eq:condition_alpha_epsilon} scales as $\Tilde\varepsilon \log(E+1)$, i.e. logarithmically in $E$ with a small factor $\Tilde\varepsilon$ in front. Therefore, the inequality remains very stable with respect to $E$. For instance, if one picks $E=10,10^2,10^4$, the values of the maximum $\Tilde{\varepsilon}$ remain almost unchanged. 

For the values of $\varepsilon=0.1$ and $\Tilde{\varepsilon}=0.004$, the size of the $(\varepsilon,q)$-classical rounding in Proposition~\ref{prop:size_k} is $k=12\cdot2^{2q+2m_0}$. Then, \eqref{eq:prob<=} in the proof of Proposition~\ref{prop:size_k} is strictly upper bounded by $2^{-2^n}$ if $q\leq\frac{n}{2}-m_0-5$, for $n>2(m_0+5)$. Then, for the above energies, Theorem~\ref{thm:main} can be restated as follows.
\begin{corollary}\label{corollary:perfect_channel} Let $\varepsilon=0.1$, $t=1,u=0$. Let $n >2(m_0+5)$. Let the number of qubits that each Alice and Bob control at the beginning of the protocol be  
\begin{equation}
    q\leq\frac{n}{2}-m_0-5.
\end{equation}
Then, with probability $1-O(\lambda^{2^{m_0}})$ the following holds.
  A random function $f$  fulfills the following with probability at least $1 - 2^{-2^{n}}$: the uncertainties for Alice and Bob when attacking the protocol \QPVcohf~are such that
    \begin{equation}
    \max\{h(U_{\theta}|PAB_c)_{\phi},h(U_{\theta}| A_cB)_{\phi}\}\geq h(U|P)_{\psi}+\frac{\varepsilon}{4},
\end{equation}  
for every state $\ket{\phi}\in\mathbb C^{2q+2m_0}$, for $\theta\in\{0,\frac{\pi}{2}\}$.    
\end{corollary}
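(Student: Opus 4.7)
The plan is to derive the corollary as a direct specialization of Theorem~\ref{thm:main} to the perfect-channel regime $t=1$, $u=0$, $\varepsilon = 0.1$, with the asymptotic big-$O$ bounds on $q$ and $n$ replaced by the explicit linear inequalities stated. I would proceed by verifying the three ingredients of Theorem~\ref{thm:main} in order: the existence of $(\Tilde\varepsilon, \alpha)$ satisfying \eqref{eq:condition_alpha_epsilon}, the size $k$ of the $(\varepsilon,q)$-classical rounding from Proposition~\ref{prop:size_k}, and the tail bound \eqref{eq:prob<=} in the proof of Lemma~\ref{lem:stable_E4}.

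First I would fix the energy cutoff at $E = 10^3$ in units with $\hbar\omega = 1$ and numerically maximize $\Tilde\varepsilon$ over $\alpha \in [0, 1/2]$ subject to \eqref{eq:condition_alpha_epsilon}. As recorded just before the corollary, the maximum is $\Tilde\varepsilon \simeq 0.0037$ at $\alpha \simeq 0.036$, which validates the hypothesis of Theorem~\ref{thm:main}. A key qualitative observation I would highlight is that $E$ enters only through the term $\Tilde\varepsilon\log(E+1)$, so the optimum $\Tilde\varepsilon$ is robust in $E$; choosing $E = 10$ or $E = 10^4$ yields essentially the same value, which is why the corollary's conclusion does not need to record a particular choice of $E$.

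Next I would plug $\Tilde\varepsilon \simeq 0.004$ into the size formula from Proposition~\ref{prop:size_k}, evaluate the cube root and the logarithm, and verify that $k \leq 12 \cdot 2^{2q+2m_0}$ after rounding up. Substituting into \eqref{eq:prob<=} gives an upper bound of the form $2^{(2^{n+1}+1)k + 2^{2n} h(1/4) - 2^{2n}}$; since $1 - h(1/4) \simeq 0.189$, I would require the negative $2^{2n}$ term to exceed the positive $(2^{n+1}+1)k$ contribution by at least $2^n$. This reduces to a linear inequality between the exponents $2q + 2m_0$ and $n$, and a short calculation with the slack $q \leq n/2 - m_0 - 5$ together with $n > 2(m_0 + 5)$ shows the whole expression is at most $2^{-2^n}$. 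With the three hypotheses confirmed, the conclusion of the corollary follows immediately from Theorem~\ref{thm:main}.

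The main obstacle I expect is the bookkeeping of numerical constants in the chain \eqref{eq:condition_alpha_epsilon} $\rightarrow$ $k$ $\rightarrow$ \eqref{eq:prob<=}: each step is elementary in isolation, but the slack constants (the factor $12$ in $k$, the $-5$ in the bound on $q$, and the $+5$ in the bound on $n$) must be matched so the final inequality closes crisply. The numerical optimization of $\Tilde\varepsilon$ itself is routine given the smoothness of the constraint in $\alpha$ over $[0, 1/2]$, so the only real work is checking that the stated slack indeed suffices to absorb the $(2^{n+1}+1)k$ term as well as the additive $2^n$ safety margin.
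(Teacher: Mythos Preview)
Your proposal is correct and follows essentially the same route as the paper: the paragraph immediately preceding the corollary fixes $E=10^3$, numerically optimizes \eqref{eq:condition_alpha_epsilon} to get $\Tilde\varepsilon\simeq0.0037$ at $\alpha\simeq0.036$, notes the logarithmic (hence negligible) dependence on $E$, substitutes $\Tilde\varepsilon\simeq0.004$ into Proposition~\ref{prop:size_k} to obtain $k=12\cdot2^{2q+2m_0}$, and then checks that \eqref{eq:prob<=} is at most $2^{-2^n}$ under the stated constraints on $q$ and $n$ before invoking Theorem~\ref{thm:main}. Your identification of the three ingredients, the specific constants, and the source of the slack all match the paper's derivation exactly.
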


\subsection{Imperfect channel}
We do the analysis for an imperfect channel for some $(u,t)$ such that condition \eqref{eq:relation_t_u} is fulfilled. We pick the parameters plotted in Fig.~\ref{fig:u_t_relation}. For the values of $\varepsilon$, $t$ and $u$ in Table~\ref{Table:values_epsilon}, we find the maximum $\Tilde{\varepsilon}$ for $E=10^3$, and we have the same linear bounds as in Corollary~\ref{corollary:perfect_channel}.

\begin{table}[h]
\centering
\begin{tabular}{|c|c|c|c|c|}
\hline
$\varepsilon$ & $t$  & $u$   & $\alpha$ & $\Tilde{\varepsilon}$ \\ \hline
0.03          & 0.8  & 0.05  & 0.013    & 0.00031              \\ \hline
0.03          & 0.9  & 0.12  & 0.013    & 0.00029              \\ \hline
0.07          & 0.95 & 0.075 & 0.025    & 0.00131              \\ \hline
\end{tabular}
\caption{Maximum value of $\Tilde{\varepsilon}$ fulfilling \eqref{eq:condition_alpha_epsilon} given $\varepsilon, t $ and $u$ with its corresponding value of $\alpha$ that attains it.}
\label{Table:values_epsilon}
\end{table}

\section{Open problems}
In the discrete variable case \cite{allerstorfer2023making} has recently found a way around the problem of transmission loss. It's an interesting open question whether the idea in \cite{allerstorfer2023making} also could work for CV-QPV to make the practical appeal of studying these protocols higher.

\subsubsection*{Acknowledgments}
We thank Philip Verduyn Lunel for valuable discussions. RA was supported by the Dutch Research Council (NWO/OCW), as part of the Quantum Software Consortium programme (project number 024.003.037). FS and LEF are supported by the Dutch Ministry of Economic Affairs and Climate Policy (EZK), as part of the Quantum Delta NL programme. B\v{S} and AAR acknowledge the support from Groeifonds Quantum Delta NL KAT2.

\bibliographystyle{alphaurl}
\bibliography{references}
\end{document}